\definecolor{bleu_sombre}{rgb}{0,0,0.6}  \definecolor{rouge_sombre}{rgb}{0.8,0,0}\definecolor{vert_sombre}{rgb}{0,0.6,0}
\theoremstyle{plain}
\newtheorem{theorem}{{Theorem}}[section]
\newtheorem*{theorem*}{{Theorem}}
\newtheorem{proposition}[theorem]{Proposition}
\newtheorem*{proposition*}{Proposition}
\newtheorem*{corollary*}{Corollary}
\newtheorem{lemma}[theorem]{Lemma}
\newtheorem*{lemma*}{Lemma}
\theoremstyle{definition}
\newtheorem*{definition*}{Definition}
\theoremstyle{remark}
\newtheorem{remark}[theorem]{Remark}
\renewcommand{\leq}{\leqslant}	\renewcommand{\geq}{\geqslant}
\newcommand{\R}{\mathbb{R}}
\newcommand{\dd}{\mathrm{d}}
\renewcommand{\Re}{\mathrm{Re}\,}
\renewcommand{\Im}{\mathrm{Im}\,}
\begin{document}

\title[]{On Duclos-Exner's conjecture about waveguides in strong uniform magnetic fields}
\author[E. Bon-Lavigne]{Enguerrand Bon-Lavigne}

\author[L. Le Treust]{Loïc Le Treust}

\author[N. Raymond]{Nicolas Raymond}

\author[J. Royer]{Julien Royer}

\begin{abstract}
We consider the Dirichlet Laplacian with uniform magnetic field on a curved strip in two dimensions. We give a sufficient condition ensuring the existence of the discrete spectrum in the strong magnetic field limit.

\end{abstract}

\maketitle

\section{Introduction and statement of the main results}

\newcommand{\blue}{\color{bleu_sombre}}

In this article, we address the question of existence of the discrete spectrum for a magnetic Laplacian with Dirichlet boundary condition on a two-dimensional curved waveguide.

\subsection{What is a waveguide?}
Let $\gamma : \R \to \R^2$ be a smooth and injective curve with $|\gamma'|= 1$. 
We set $\mathbf N = (\gamma')^\bot$, where for $(a,b) \in \R^2$ we write $(a,b)^\bot$ for $(-b,a)$. 
We denote by $\kappa$ the algebraic curvature of $\gamma$. It is defined by
\[\gamma'' = \kappa \mathbf N\,.\] 
In this article, we work under the assumption that $\kappa$ is compactly supported. For $\delta > 0$ small enough, the function 
\[
\Theta : \left\{ \begin{array}{ccc} \R \times (-\delta,\delta) & \to & \R^2 \\ (s,t) & \mapsto & \gamma(s) + t \mathbf N(s) \end{array} \right.
\]
is injective. We set 
\[\Omega = \Omega_{\gamma,\delta} = \Theta (\Omega_0)\,,\quad \mbox{ with }\quad\Omega_0 = \Omega_{0,\delta} =\R \times (-\delta,\delta)\,.\]
The open set $\Omega$ is what we call a waveguide in this work. 

\subsection{The magnetic Laplacian with Dirichlet boundary conditions}\label{sec.ML}
The waveguide $\Omega$ is subject to a perpendicular uniform magnetic field with intensity $B$. That is why we consider a vector potential $\mathbf A = (A_1,A_2)$ that is smooth on $\overline{\Omega}$, and such that
\begin{equation}\label{eq.rotA}
\partial_{x_1} A_2 - \partial_{x_2} A_1 = 1\,.
\end{equation}
A fundamental property related to magnetic problems on simply connected domains is the gauge invariance. It is nothing but the fact that \eqref{eq.rotA} only defines $\mathbf{A}$ up to adding a gradient vector field. Of course, it is trivial that there is a smooth solution to \eqref{eq.rotA}, since it is sufficient to consider $\mathbf{A}=(0,x_1)$. Actually, one will see that there is a natural choice of vector potential in our setting. Finding a gauge that is adapted to the structure of the waveguide is in fact part of our problem and it has been tackled in the past; see, for instance, \cite{E93} where a curvature-dependent gauge is introduced. We now assume that $\mathbf{A}$ can be chosen smooth on $\overline{\Omega}$ and bounded with bounded derivatives (at any order). It will be explained in Proposition \ref{prop.phi} that we may indeed assume this. 

For $B > 0$, we consider on $\Omega$ the magnetic Laplacian corresponding to the uniform field equal to $B$:
\begin{equation}\label{eq.B}
(-i\nabla - B \mathbf A)^2 - B\,,
\end{equation}
subject to Dirichlet boundary conditions. The subtraction of $B$ is made for the convenience of the analysis and does not change the presence or absence of discrete spectrum (it is based on relating
the Schr\"odinger operator 
to the square of a Dirac operator). In order to use semiclassical analysis we also introduce the positive parameter $h = B^{-1}$ and set 
\[
\mathscr{P}_{h}=(-ih\nabla-\mathbf{A})^2-h.
\]
The operator $\mathscr{P}_{h}$ is well defined and selfadjoint on the domain 
\[
\mathsf{Dom}(\mathscr{P}_{h}) = H^1_0(\Omega) \cap H^2(\Omega)\,.
\]

\subsection{A subtle question and a conjecture by P. Duclos and P. Exner}
Our aim is to study the existence of the discrete spectrum of $\mathscr{P}_h$ in the semiclassical limit $h\to 0$ (equivalent to the large magnetic field limit, see \eqref{eq.B}). This question of existence is actually subtle since, when $h$ goes to $0$, not only the bottom of the spectrum moves, but also the bottom of the essential spectrum. In this limit, it is natural to wonder if the bottom of the spectrum stays away from the threshold of the essential spectrum or collides with it. This question is all the more appealing that, when the magnetic field is zero, that is when considering the Dirichlet Laplacian on a strip, one knows that the discrete spectrum always exists as soon as the strip is not straight (see, for instance, \cite{DE95} or the book \cite[Chapter 1]{EK15}). It is also known that (variable) magnetic fields can play against the existence of the discrete spectrum. Such considerations can be found in \cite[Theorem 2.8 \& Proposition 2.11]{KR14} where a magnetic Hardy inequality is proved when the magnetic field has \emph{compact support} and used to establish that the discrete spectrum is empty when the magnetic field is strong enough (see also the original work \cite{EK05})\footnote{Let us also mention that, in \cite{KR14}, the spectrum is also analyzed (by means of resolvent convergence) in the shrinking limit $\delta\to0$ with a possibly $\delta$-dependent magnetic field. Deriving effective operators in such regimes can actually be done in a quite general framework, see \cite{HLT19}. }. 

In the mid nineties, buoyed by the momentum of their work \cite{DE95}, Pierre Duclos and Pavel Exner conjectured that \emph{the discrete spectrum of \eqref{eq.B} is empty when the magnetic field is strong enough (\underline{and uniform})}. This conjecture was explicitely formulated ten years ago during an "Open Problems" session in Barcelona, see \cite{Exner12}. Our main result disproves the conjecture when the waveguide has a \emph{fixed} width $\delta$ assumed to be small enough, but \emph{independently} of $B$.

\subsection{Main result}
Our main result is the following.

\begin{theorem} \label{th.main}
Assume that $\kappa^2$ has a unique maximum, which is non-degenerate. There exist $\delta_0 > 0$ and $h_0 > 0$ such that for all $\delta \in (0,\delta_0)$ and all $h \in (0,h_0)$ we have 
\[
\inf \mathrm{sp}(\mathscr{P}_{h}) < \inf \mathrm{sp}_{\mathsf{ess}} (\mathscr{P}_{h}).
\]
In particular, $\mathscr{P}_{h}$ has non-empty discrete spectrum.
\end{theorem}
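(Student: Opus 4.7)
My plan is to apply the min-max principle: I would construct an explicit trial function whose Rayleigh quotient lies strictly below $\inf \mathrm{sp}_{\mathsf{ess}}(\mathscr{P}_h)$. The starting point is to transport $\mathscr{P}_h$ to the reference strip $\Omega_0$ via $\Theta$ and fix the gauge provided by Proposition \ref{prop.phi}. In these coordinates, the resulting operator $\mathscr{L}_h$ has coefficients depending on the Jacobian $g(s,t) = 1 - t\kappa(s)$, and it coincides, outside the compact support of $\kappa$, with the straight-strip magnetic Laplacian $\mathscr{L}_h^{\mathrm{str}}$ on $\R \times (-\delta,\delta)$. A Persson-type argument, together with explicit Weyl sequences built from the straight-strip eigenfunctions, shows that
\[
\inf \mathrm{sp}_{\mathsf{ess}}(\mathscr{P}_h) = \inf \mathrm{sp}(\mathscr{L}_h^{\mathrm{str}}) =: \mu_{\mathsf{ess}}(h,\delta).
\]
Partial Fourier transform in $s$ decomposes $\mathscr{L}_h^{\mathrm{str}}$ as a direct integral of Dirichlet Schrödinger operators on $(-\delta,\delta)$, and minimization over the Fourier variable gives a semi-explicit description of $\mu_{\mathsf{ess}}(h,\delta)$.

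Next, I would construct a semiclassical trial function localized around the unique non-degenerate maximum $s_0$ of $\kappa^2$. With $\alpha \in (0, 1/2)$ to be tuned, set
\[
\psi_h(s,t) = \chi\!\left(h^{-\alpha}(s-s_0)\right) v_h(s,t),
\]
where $\chi \in C^\infty_c(\R)$ has unit $L^2$ norm and $v_h(s,\cdot)$ is the normalized ground state on $(-\delta,\delta)$ of the $s$-frozen transverse operator. A careful expansion of $\langle \mathscr{L}_h \psi_h, \psi_h \rangle$ in powers of $(s-s_0)$ and $h$, using the regularity of $s \mapsto v_h(s,\cdot)$ and the Feynman-Hellmann formula, produces an effective one-dimensional operator of the form
\[
H_{\mathrm{eff}} = -m(h,\delta)\,\partial_s^2 + \lambda(s;h,\delta),
\]
where $\lambda(s;h,\delta)$ is the transverse ground state energy at fixed $s$ and $m(h,\delta)>0$ is an effective mass.

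The assumption on $\kappa^2$ enters in the final step: I expect a Taylor expansion
\[
\lambda(s;h,\delta) = \mu_{\mathsf{ess}}(h,\delta) - c(h,\delta)\,\kappa(s)^2 + \text{h.o.t.},
\]
with $c(h,\delta) > 0$, so that the non-degenerate maximum of $\kappa^2$ creates a genuine quadratic well for $H_{\mathrm{eff}}$ near $s_0$. Balancing the longitudinal kinetic cost $O(h^{2-2\alpha})$ of $\chi(h^{-\alpha}\cdot)$ against the depth of this well and applying the min-max principle to $H_{\mathrm{eff}}$ should give a Rayleigh quotient for $\psi_h$ strictly below $\mu_{\mathsf{ess}}(h,\delta)$, which concludes the proof of Theorem \ref{th.main}.

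The main obstacle is controlling the sign and magnitude of the effective coupling $c(h,\delta)$. Since $\mu_{\mathsf{ess}}(h,\delta) - h$ is tiny (exponentially small in $\delta^2/h$ for fixed $\delta$), a crude upper bound on $\lambda(s;h,\delta)$ would miss the attraction one needs. One has to expand $v_h$ to sufficient order, tracking the interplay between the transported vector potential $\tilde{\mathbf A}$ and the curvature-induced metric perturbation through $g$, and then verify that the leading $\kappa^2$-contribution is actually attractive. This is exactly where the assumption $\delta$ small is crucial: it makes the transverse problem a controlled perturbation of the straight-strip model and allows a consistent asymptotic expansion in which the geometric attraction, analogous to the Duclos-Exner mechanism but in the magnetic setting, dominates the exponentially small essential-spectrum threshold.
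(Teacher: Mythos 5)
Your identification of the essential spectrum via comparison with the straight strip is in line with the paper's Lemma \ref{lem.Weyl} and Proposition \ref{prop.Ph0}, and the lower bound $\lambda_{\mathrm{ess}}(h)\gtrsim h^2 e^{-\delta^2/h}$ is the right quantity to keep in mind. The problem lies entirely in the second half of your plan, and the obstacle is more severe than you acknowledge: it is not merely a matter of expanding $v_h$ to ``sufficient order,'' but that the Born--Oppenheimer strategy is structurally incapable of resolving the gap. Your trial function $\psi_h = \chi(h^{-\alpha}(s-s_0))\,v_h(s,t)$ incurs a longitudinal kinetic cost of order $h^{2-2\alpha}$, plus Born--Huang cross terms involving $\|\partial_s v_h\|^2$ which are also polynomial in $h$ (they come from $s$-derivatives of the metric factor $a=1-t\kappa(s)$ and of the adapted vector potential $\tilde A = t - \kappa t^2/2$). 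On the other hand, the best one can hope from the transverse well depth $\mu_{\mathrm{ess}}(h,\delta)-\lambda(s_0;h,\delta)$ is a quantity on the scale of $\mu_{\mathrm{ess}}(h,\delta)$ itself, i.e.\ $O(h^2 e^{-\delta^2/h})$; indeed the paper's final bound only improves the essential threshold by a \emph{factor} $e^{2(\phi_{\min}+\delta^2/2)/h}$, with $\phi_{\min}+\delta^2/2\sim -\delta^4\kappa(s_0)^2/36$, and not by a polynomial margin. No choice of $\alpha\in(0,1)$ makes $h^{2-2\alpha}$ smaller than an exponentially small well; and if you instead localize at exponentially large scale $L\sim e^{\delta^2/h}$ to tame the kinetic cost, the curvature region (compactly supported) is diluted by the same factor and you again need exponentially sharp control of $\lambda(s;h,\delta)$, which a tensor product ansatz and a Feynman--Hellmann expansion in integer powers of $h$ cannot provide. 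This is precisely the structural reason why the Duclos--Exner conjecture was credible: all polynomial-order curvature effects cancel in the strong-field limit, and what survives is exponentially small.

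The idea you are missing is the one that makes the paper work without any Born--Oppenheimer expansion. Choose the gauge $\mathbf A = \nabla\phi^\perp$ where $\phi$ solves $\Delta\phi=1$ with $\phi|_{\partial\Omega}=0$ and $\phi-\hat\phi_0\in\mathscr{S}(\overline\Omega)$ (Proposition \ref{prop.phi}). Then Lemma \ref{lem.ephih} turns the magnetic quadratic form \emph{exactly} into
\[
\|(-ih\nabla-\mathbf A)\psi\|^2 - h\|\psi\|^2 = 4h^2\int_\Omega e^{-2\phi/h}|\partial_{\bar z}u|^2\,\mathrm dx,\qquad u=e^{\phi/h}\psi,
\]
with no polynomial remainder to cancel. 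Taking $u = f\chi$ with $f$ holomorphic (not localized in $s$!) and $\chi$ a boundary cutoff, $\partial_{\bar z}u = f\,\partial_{\bar z}\chi$ is supported only in a collar of width $\epsilon=h|\ln h|$ where $\phi\approx 0$; the numerator is then $O(h)$, while the Laplace method applied to the denominator $\int e^{-2\phi/h}|u|^2$ extracts a factor $h\,e^{-2\phi_{\min}/h}$ concentrated at the interior minimum of $\phi$. The ratio is $e^{2\phi_{\min}/h}(1+o(1))$ times an explicit constant, and the hypothesis on $\kappa^2$ guarantees (Proposition \ref{prop.assumptionvalid}) that $\phi_{\min}<-\delta^2/2$, hence strictly below the essential threshold. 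The localization is accomplished by the weight $e^{-2\phi/h}$, not by cutting off the trial function in $s$; this is what avoids the polynomial kinetic penalty that dooms your ansatz.
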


We can be more precise and provide some bounds for the bottoms of spectrum and essential spectrum. For this we compare the spectral properties of the magnetic Laplacian on $\Omega$ to those on $\Omega_0$. On $\Omega_0$ we set $\mathbf A_0(s,t) = (-t,0)$ and we consider on $L^2(\Omega_0)$ the operator $\mathscr{P}_{h,0} = (-ih\nabla - \mathbf A_0)^2 - h$, with Dirichlet boundary conditions. Then, we have the following result about the essential spectrum of $\mathscr{P}_{h}$.

\begin{proposition}\label{prop.ess}
For $h > 0$ we set
\[
\lambda_{\mathrm{ess}}(h) = \inf \mathrm{sp}(\mathscr{P}_{h,0}).
\]
Then
\[
\mathrm{sp}_{\mathrm{ess}}(\mathscr{P}_h)=\mathrm{sp}_{\mathrm{ess}}(\mathscr{P}_{h,0})=\mathrm{sp}(\mathscr{P}_{h,0})=[\lambda_{\mathrm{ess}}(h),+\infty)
\]
and
\[
\lambda_{\mathrm{ess}}(h)\geq \frac{(\pi h)^2}{4\delta^2} e^{-\delta^2/h}\,. 
\]
\end{proposition}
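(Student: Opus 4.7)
To determine $\mathrm{sp}(\mathscr{P}_{h,0})$, I would exploit the fact that $\mathbf{A}_0=(-t,0)$ does not depend on $s$: the operator $\mathscr{P}_{h,0}$ commutes with translations in $s$, and a partial Fourier transform in that variable yields the direct-integral decomposition
\[
\mathscr{P}_{h,0}\cong\int_{\R}^{\oplus}\mathcal{L}_{\xi}\,\frac{d\xi}{2\pi},\qquad \mathcal{L}_{\xi}=-h^{2}\partial_{t}^{2}+(t+h\xi)^{2}-h,
\]
on $L^{2}((-\delta,\delta))$ with Dirichlet boundary condition. Each fiber has compact resolvent; its first eigenvalue $\lambda_{1}(\xi)$ is real-analytic in $\xi$ and tends to $+\infty$ as $|\xi|\to\infty$ (since $(t+h\xi)^{2}\geq(|h\xi|-\delta)_{+}^{2}$ on $(-\delta,\delta)$). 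The first band $\lambda_{1}(\R)$ is therefore a closed half-line $[\inf_{\xi}\lambda_{1}(\xi),+\infty)$ already covering the full spectrum of $\mathscr{P}_{h,0}$, and analyticity precludes the band functions from being constant on sets of positive measure, so no isolated eigenvalues appear. This yields $\mathrm{sp}(\mathscr{P}_{h,0})=\mathrm{sp}_{\mathrm{ess}}(\mathscr{P}_{h,0})=[\lambda_{\mathrm{ess}}(h),+\infty)$ with $\lambda_{\mathrm{ess}}(h)=\inf_{\xi}\lambda_{1}(\xi)$.

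For the equality $\mathrm{sp}_{\mathrm{ess}}(\mathscr{P}_{h})=\mathrm{sp}_{\mathrm{ess}}(\mathscr{P}_{h,0})$, I would rely on a Weyl-sequence argument. Fixing a compact set $K\supset\mathrm{supp}\,\kappa$, the map $\Theta$ identifies $\Omega_{0}\setminus K$ with two straight half-strips of $\Omega\setminus\Theta(K)$, and by using the gauge furnished by Proposition \ref{prop.phi} one may conjugate the pullback of $\mathscr{P}_{h}$ under $\Theta$ by a unitary multiplication $e^{i\phi/h}$ so that the result coincides with $\mathscr{P}_{h,0}$ on those half-strips. Given $\lambda\in\mathrm{sp}_{\mathrm{ess}}(\mathscr{P}_{h,0})$, I would translate a Weyl sequence for $\mathscr{P}_{h,0}$ along $s$ until its support lies outside $K$, and then transport it via $\Theta$ and the gauge to obtain a Weyl sequence for $\mathscr{P}_{h}$ at the same $\lambda$. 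The reverse inclusion follows by the same construction starting from $\mathscr{P}_{h}$.

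The quantitative lower bound rests on the Dirac-type factorization $\mathcal{L}_{\xi}=D_{\xi}^{*}D_{\xi}$ with $D_{\xi}=h\partial_{t}+(t+h\xi)$ (which is the 1D counterpart of the subtraction of $h$ in \eqref{eq.B}). The substitution $u(t)=e^{-(t+h\xi)^{2}/(2h)}\,w(t)$, with $w\in H^{1}_{0}((-\delta,\delta))$, gives $D_{\xi}u=h\,e^{-(t+h\xi)^{2}/(2h)}w'$, whence
\[
\lambda_{1}(\xi)=h^{2}\inf_{w\neq 0}\frac{\int_{-\delta}^{\delta}e^{-(t+h\xi)^{2}/h}|w'|^{2}\,dt}{\int_{-\delta}^{\delta}e^{-(t+h\xi)^{2}/h}|w|^{2}\,dt}.
\]
The symmetry $(t,\xi)\mapsto(-t,-\xi)$ of the fiber makes $\xi\mapsto\lambda_{1}(\xi)$ even; a Feynman-Hellmann computation at $\xi=0$ (using that the ground state of $\mathcal{L}_{0}$ is even in $t$ while the perturbation $2h\xi t$ is odd) shows $\lambda_{1}'(0)=0$ and $\lambda_{1}''(0)>0$, so combined with $\lambda_{1}(\xi)\to\infty$ this identifies the global minimum at $\xi=0$. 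At $\xi=0$, the pointwise bounds $e^{-\delta^{2}/h}\leq e^{-t^{2}/h}\leq 1$ on $(-\delta,\delta)$ together with the unweighted Dirichlet Poincaré inequality $\int_{-\delta}^{\delta}|w'|^{2}\geq (\pi/(2\delta))^{2}\int_{-\delta}^{\delta}|w|^{2}$ yield
\[
\lambda_{1}(0)\geq h^{2}\cdot e^{-\delta^{2}/h}\cdot\frac{\pi^{2}}{4\delta^{2}}=\frac{(\pi h)^{2}}{4\delta^{2}}\,e^{-\delta^{2}/h}.
\]

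The most delicate step will be the gauge identification in the essential-spectrum comparison: verifying rigorously that after conjugation by the gauge of Proposition \ref{prop.phi} the operator $\mathscr{P}_{h}$ really matches $\mathscr{P}_{h,0}$ on the two asymptotic half-strips, so that Weyl sequences transport cleanly between the two. The weighted Poincaré estimate at $\xi=0$ and the Feynman-Hellmann identification of $\xi^{*}=0$ as the minimizer of $\lambda_{1}$ are the other substantive ingredients, but both should be essentially standard.
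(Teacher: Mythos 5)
Your direct-integral decomposition and the conclusion that $\mathrm{sp}(\mathscr{P}_{h,0})$ is a half-line coinciding with its essential spectrum match the paper (which, if anything, is less careful than you about ruling out isolated eigenvalues). For the equality $\mathrm{sp}_{\mathrm{ess}}(\mathscr{P}_h)=\mathrm{sp}_{\mathrm{ess}}(\mathscr{P}_{h,0})$ the paper does not use Weyl sequences: it invokes a resolvent-difference comparison (Lemma~\ref{lem.Weylgen}) and checks that $(\mathscr{P}_{h,0}-i)^{-1}-(\widetilde{\mathscr{P}}_h-i)^{-1}$ is compact because the difference of the two operators is a compactly supported, second-order perturbation. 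Your singular-sequence strategy is a legitimate alternative, but it would need to be fleshed out: the ``easy'' inclusion (transporting a translated Weyl sequence for $\mathscr{P}_{h,0}$ into $\Omega$) requires a gauge change to reconcile $\mathbf{A}_0$ with the chosen $\mathbf{A}$ on the half-strip (they agree only up to a Schwartz correction, or up to a gradient that has to be absorbed into a phase), and the reverse inclusion requires a localization-away-from-compacta lemma (IMS/Zhislin-type) that you state as ``the same construction,'' which it is not.

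The genuine gap is in the quantitative lower bound. You substitute $u=e^{-(t+h\xi)^2/(2h)}w$, obtain the weighted Rayleigh quotient, and then bound that quotient pointwise in the weight \emph{only at} $\xi=0$, where $e^{-(t+h\xi)^2/h}=e^{-t^2/h}$ ranges over $[e^{-\delta^2/h},1]$. This gives $\lambda_1(0)\geq\frac{(\pi h)^2}{4\delta^2}e^{-\delta^2/h}$, but what is needed is a lower bound on $\lambda_{\mathrm{ess}}(h)=\inf_\xi\lambda_1(\xi)$. Your pointwise control of the weight degrades as $|\xi|$ grows (the ratio of the extremal values of $e^{-(t+h\xi)^2/h}$ on $(-\delta,\delta)$ is $e^{-(\delta+h|\xi|)^2/h}$ once $-h\xi\in(-\delta,\delta)$), so the same argument at generic $\xi$ gives nothing useful. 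To fall back on $\lambda_1(0)$ you therefore need $\xi=0$ to be the \emph{global} minimizer of $\lambda_1$, and the Feynman--Hellmann remark only shows that $\xi=0$ is a critical point; evenness and $\lambda_1\to\infty$ do not exclude lower local minima elsewhere, and even the sign of $\lambda_1''(0)$ is not immediate (second-order perturbation theory contributes a nonpositive term $-2\langle(\mathcal{L}_0-\lambda_1(0))^{-1}P^{\perp}(\partial_\xi\mathcal L_0)u_0,\,P^{\perp}(\partial_\xi\mathcal L_0)u_0\rangle$ that you do not control). The paper avoids this entirely: it uses Lemma~\ref{lem.ephih} with the $\xi$-\emph{independent} weight $\phi_0(s,t)=(t^2-\delta^2)/2$ to rewrite $\inf\mathrm{sp}(\mathscr{P}_{h,0})$ as $\inf_u 4h^2\|e^{-\phi_0/h}\partial_{\bar z}u\|^2/\|e^{-\phi_0/h}u\|^2$ over $H^1_0(\Omega_0)$, bounds $e^{-2\phi_0/h}$ by $[1,e^{\delta^2/h}]$ \emph{uniformly in $(s,t)$}, and then uses $4\|\partial_{\bar z}u\|^2=\|\nabla u\|^2\geq\lambda_1^{\mathrm{Dir}}((-\delta,\delta))\|u\|^2$. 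Because the weight is independent of $s$ (hence of $\xi$), this controls $\lambda_1(\xi)$ simultaneously for all $\xi$ with no need to locate the minimizer. If you want to keep the fiber-wise formulation, replace your Gaussian weight $e^{-(t+h\xi)^2/(2h)}$ by $e^{-\phi_0(t)/h}$ (independent of $\xi$): the cross term $\int \xi\,w\,\overline{w'}$ is purely imaginary, so the weighted quotient is bounded below by $h^2 e^{-\delta^2/h}(\xi^2+\pi^2/(4\delta^2))$, and the desired bound follows uniformly in $\xi$.
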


To prove an upper bound on the bottom of the spectrum we first introduce on $\Omega_0$ the function $\phi_0$ defined by 
\[
\phi_0(s,t)=\frac{t^2-\delta^2}2.
\]
Then we define $\hat \phi_0 = \phi_0 \circ \Theta^{-1} \in \mathscr C^\infty(\overline \Omega)$. In particular, $\hat \phi_0$ vanishes on $\partial \Omega$. In order to perform the analysis of the bottom of the spectrum, we will use a function $\phi$, looking like $\hat \phi_0$ at infinity, defined thanks to the following proposition.

\begin{proposition}\label{prop.phi}
There exists a unique $\phi\in\mathscr{C}^\infty(\overline{\Omega})$ such that $\Delta\phi=1$, $\phi_{|\partial\Omega}=0$, and $\phi-\hat \phi_0\in\mathscr{S}(\overline{\Omega})$. Moreover, there exists $c_0>0$ such that $\partial_\nu \phi \geq c_0$ on $\partial\Omega$, $\nu$ being the outward pointing normal to the boundary.
\end{proposition}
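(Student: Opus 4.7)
The strategy is to seek $\phi$ in the form $\phi = \hat\phi_0 + \psi$, reducing the statement to a Dirichlet problem for $\psi$ with a smooth, compactly supported right-hand side. Expressing the Laplacian in tubular coordinates $(s,t)$, whose metric tensor is $(1-t\kappa(s))^2\,ds^2 + dt^2$, a direct computation gives
\begin{equation*}
\Delta\hat\phi_0 = \frac{1-2t\kappa(s)}{1-t\kappa(s)} = 1 - \frac{t\kappa(s)}{1-t\kappa(s)}\,.
\end{equation*}
Setting $f := 1 - \Delta\hat\phi_0 = t\kappa/(1-t\kappa)$, the function $f$ is smooth and bounded on $\overline\Omega$ (for $\delta$ small enough that $1-t\kappa$ stays bounded away from $0$), and vanishes outside $\mathrm{supp}(\kappa)\times(-\delta,\delta)$, hence is compactly supported in $\overline\Omega$. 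The proposition thus reduces to finding a unique $\psi \in \mathscr{S}(\overline\Omega)$ with $\Delta\psi = f$ and $\psi_{|\partial\Omega}=0$.

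For existence and uniqueness, the Poincaré inequality in the transverse direction, transferred from the straight strip via $\Theta$, shows that the Dirichlet Laplacian $-\Delta_\Omega^D$ is bounded below by a positive constant; in particular it is an isomorphism from $H^1_0(\Omega)\cap H^2(\Omega)$ onto $L^2(\Omega)$. Thus there exists a unique $\psi\in H^1_0(\Omega)\cap H^2(\Omega)$ with $\Delta\psi=f$. Elliptic regularity up to the boundary on the smooth domain $\Omega$ gives $\psi\in\mathscr C^\infty(\overline\Omega)$. Uniqueness in the proposition itself follows since the difference of any two candidate solutions would be harmonic, in $L^2(\Omega)$, and vanishing on $\partial\Omega$, hence zero.

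The main technical point is the Schwartz decay $\psi\in\mathscr S(\overline\Omega)$. Fix $R>0$ with $\mathrm{supp}(\kappa)\subset[-R,R]$. In tubular coordinates, on each half-strip $\{\pm s>R\}\times(-\delta,\delta)$ the waveguide is flat and $\psi$ is harmonic. Expanding in the transverse Dirichlet eigenbasis $e_n(t)=\delta^{-1/2}\sin(n\pi(t+\delta)/(2\delta))$, write $\psi(s,t)=\sum_{n\geq 1}a_n(s)e_n(t)$; the equation $\Delta\psi=0$ becomes $a_n'' = (n\pi/(2\delta))^2 a_n$, and $L^2$-integrability selects the decaying branch $a_n(s)\propto e^{-n\pi|s|/(2\delta)}$. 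Hence $\psi$ decays exponentially as $|s|\to\infty$. Applying the same decomposition to $\partial_s^k\psi$ (which is also harmonic on the straight ends and square-integrable by elliptic regularity) yields exponential decay of all $s$-derivatives; $t$-derivatives are bounded via smoothness up to the boundary. These combine into $\psi\in\mathscr S(\overline\Omega)$.

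For the lower bound on $\partial_\nu\phi$, observe that $\Delta\phi=1>0$, so $\phi$ is subharmonic; it vanishes on $\partial\Omega$, and $\phi(s,t)\to(t^2-\delta^2)/2\leq 0$ as $|s|\to\infty$ by the Schwartz decay of $\psi$. A truncation argument applying the classical maximum principle on expanding bounded subdomains gives $\phi\leq 0$ throughout $\overline\Omega$. The Hopf lemma then yields $\partial_\nu\phi>0$ at every boundary point. For a uniform bound, the Schwartz decay implies $\partial_\nu\phi\to\partial_\nu\hat\phi_0=\delta$ as $|s|\to\infty$ along $\partial\Omega$, so $\partial_\nu\phi\geq\delta/2$ outside some compact portion $K$ of $\partial\Omega$; on $K$, continuity combined with Hopf gives a positive minimum. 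Combining yields $\partial_\nu\phi\geq c_0>0$. The main obstacle is the Schwartz decay, which becomes tractable once one reduces to the Euclidean Laplacian on the straight ends via separation of variables.
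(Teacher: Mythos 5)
Your proof is correct and follows the same overall skeleton as the paper's: reduce to a Dirichlet problem with compactly supported right-hand side, solve it via Poincar\'e/Lax--Milgram, upgrade to smoothness by elliptic regularity, prove decay, and conclude the uniform positivity of $\partial_\nu\phi$ from the decay plus the Hopf lemma. The one step where you take a genuinely different route is the Schwartz decay of the correction $\psi=\phi-\hat\phi_0$. The paper establishes it by an Agmon-type weighted energy estimate: testing $-\Delta f_0 = 1-\Delta\hat\phi_0$ against $e^{2\Phi}f_0$ with a bounded Lipschitz weight $\Phi=\alpha\min(\langle x\rangle,m)$, absorbing $\|\nabla\Phi\|^2_\infty$ into the spectral gap $\lambda_1(\Omega)$, and then bootstrapping to higher Sobolev norms. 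You instead exploit the exact flatness of the strip outside $\mathrm{supp}\,\kappa$ and expand $\psi$ in the transverse Dirichlet eigenbasis on the straight ends, finding the coefficients $a_n(s)\propto e^{-n\pi|s|/(2\delta)}$; summing the series (and differentiating termwise) gives decay of all derivatives. Both arguments are valid. Your separation-of-variables approach is more elementary, gives the sharp exponential rate $e^{-\pi|s|/(2\delta)}$ explicitly, but is tied to the geometry being \emph{exactly} straight at infinity; the paper's Agmon estimate is more robust and would also work if the curvature merely decayed rapidly rather than being compactly supported. (One small imprecision: you assert that $t$-derivatives are merely ``bounded'' via boundary smoothness, whereas membership in $\mathscr S(\overline\Omega)$ requires decay of all mixed derivatives as $|s|\to\infty$; but the very Fourier expansion you wrote already delivers that, since termwise $t$-differentiation produces only polynomial growth in $n$ which is dominated by the exponential factors.) The final step on $\partial_\nu\phi$ is essentially the paper's argument; you make explicit the subharmonicity/maximum-principle step $\phi\leq0$ that the paper leaves implicit before invoking Hopf.
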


Then, by gauge invariance, we can choose $\mathbf A = \nabla \phi^\bot$ in the definition of $\mathscr P_h$. In particular, we may assume that $\mathbf A$ is bounded on $\Omega$, as announced in Section \ref{sec.ML}. Here comes our result ensuring the existence of the discrete spectrum.

\begin{theorem}\label{thm.main}
Assume that $\phi$ given by Proposition \ref{prop.phi} has a unique minimum $\phi_{\min}$ (reached at $x_{\min} \in \Omega$), which is non-degenerate and smaller than $\min \phi_0 = - \delta^2 / 2$. Then, as $h \to 0$ we have
\[\inf\mathrm{sp}(\mathscr{P}_h)\leq \frac{J}{\pi}\sqrt{\det\mathrm{Hess}_{x_{\min}}\phi}e^{2\phi_{\min}/h} \big(1+o(1)\big)\,,\]
with
\[J=\underset{f\in\mathscr{E}}{\inf}\|(\partial_\nu \phi) ^{\frac12}f\|^2_{\partial\Omega}\,,\]
and
\[\mathscr{E}=\{f\in\mathscr{O}(\Omega)\cap H^1(\Omega) :  f(x_{\min})=1\}\,,\]
where $\mathscr{O}(\Omega)$ is the set of holomorphic functions on $\Omega$.
\end{theorem}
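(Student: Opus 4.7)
The strategy is to apply the min--max characterisation: for any $u\in \mathsf{Dom}(\mathscr P_h)$, $\inf\mathrm{sp}(\mathscr P_h) \leq \langle\mathscr P_h u,u\rangle/\|u\|^2$. The essential structural ingredient is the supersymmetric factorisation afforded by $\Delta\phi=1$ (which equals the magnetic field) and $\mathbf A = \nabla\phi^\bot$: setting $\pi_j=-ih\partial_j-A_j$ and
\[L_h := \pi_1-i\pi_2 = -2ih\,e^{\phi/h}\partial_z e^{-\phi/h},\]
the commutator $[\pi_1,\pi_2]=ih$ yields $\mathscr P_h = L_h L_h^*$, so that $\langle\mathscr P_h u,u\rangle = \|L_h^*u\|^2$ with $L_h^* u = -2ih\,e^{-\phi/h}\partial_{\bar z}(e^{\phi/h}u)$. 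The elements of the ``formal kernel'' of $L_h^*$ are the Landau-type functions $e^{-\phi/h}g$ with $g\in\mathscr O(\Omega)$; they fail the Dirichlet condition but will be the backbone of the trial state. Concretely, for $f\in\mathscr E$ and a smooth cut-off $\chi$ with $\chi_{|\partial\Omega}=1$ and $\chi$ exponentially small away from $\partial\Omega$ (a natural candidate being $\chi = e^{2\phi/h}$, or more generally $\chi=\chi_0(\phi/h)$ for a profile $\chi_0$ with $\chi_0(0)=1$, $\chi_0(-\infty)=0$), I would test with
\[u := e^{-\phi/h}f(1-\chi)\in H_0^1(\Omega)\cap H^2(\Omega).\]
The holomorphy of $f$ forces $\partial_{\bar z}(f(1-\chi)) = -f\,\partial_{\bar z}\chi$, so $L_h^* u = 2ih\,e^{-\phi/h}f\,\partial_{\bar z}\chi$ is supported only where $\chi$ varies.

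The two Rayleigh-quotient integrals now concentrate in disjoint regions. The denominator
\[\|u\|^2 = \int_\Omega e^{-2\phi/h}|f|^2(1-\chi)^2\,\dd x\]
is dominated by a neighbourhood of $x_{\min}$, where the weight $e^{-2\phi/h}$ attains its unique maximum. Since $f(x_{\min})=1$ and $(1-\chi)(x_{\min})=1+o(1)$, Laplace's method in dimension two gives
\[\|u\|^2 = \frac{\pi h}{\sqrt{\det\mathrm{Hess}_{x_{\min}}\phi}}\,e^{-2\phi_{\min}/h}(1+o(1)).\]
The numerator $\|L_h^*u\|^2 = 4h^2\int e^{-2\phi/h}|f|^2|\partial_{\bar z}\chi|^2\,\dd x$, by contrast, localises in a layer of width $\sim h$ around $\partial\Omega$. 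In tubular coordinates $(s,t)$ one has $\phi(s,t) = -t\,\partial_\nu\phi(s) + O(t^2)$ and $|\partial_{\bar z}\phi|^2 = (\partial_\nu\phi)^2/4 + O(t)$; the rescaling $\tau = t\,\partial_\nu\phi(s)/h$ then converts this integral, at leading order, into
\[h\int_{\partial\Omega}|f|^2\partial_\nu\phi\,\dd\sigma\;\cdot\;\int_0^\infty e^{2\tau}|\chi_0'(-\tau)|^2\,\dd\tau,\]
a product of the boundary functional $\|(\partial_\nu\phi)^{1/2}f\|^2_{\partial\Omega}$ and a one-dimensional variational quantity in $\chi_0$ that can be minimised explicitly via its Euler--Lagrange equation.

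Combining the two asymptotics and taking the infimum over $f\in\mathscr E$ produces the announced upper bound. The main difficulty I expect is the two-scale error control: the sub-leading terms from the Laplace expansion at $x_{\min}$ (Taylor remainders of $\phi$ and $f$, suppression of $\chi$ at $x_{\min}$) and from the boundary-layer expansion (curvature of $\partial\Omega$, higher-order terms in the tubular Jacobian, matching with the bulk in the intermediate zone where neither expansion applies verbatim) must be absorbed in the $o(1)$ factor, uniformly along the unbounded waveguide; the Schwartz decay $\phi-\hat\phi_0\in\mathscr S(\overline\Omega)$ from Proposition \ref{prop.phi} together with $f\in H^1(\Omega)$ is what provides control at infinity. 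Two secondary verifications complete the argument: that the trial state lies in $H_0^1(\Omega)\cap H^2(\Omega)$, which follows from the smoothness of $\phi$, $\chi$, $f$ and the vanishing $u_{|\partial\Omega}=0$; and that the minimiser of the one-dimensional profile problem yields the precise numerical constant stated in front of $J$.
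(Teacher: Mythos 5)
Your strategy coincides with the paper's: min--max for $\mathscr P_h$, the supersymmetric identity $\mathscr P_h=L_hL_h^*$ (which is Lemma~\ref{lem.ephih}), a trial state of the form $e^{-\phi/h}\times f\times(\text{cutoff vanishing on }\partial\Omega)$ with $f$ holomorphic, then a Laplace expansion at $x_{\min}$ for the denominator and a boundary-layer analysis for the numerator. The one genuine variant is the cutoff design: you take $\chi=\chi_0(\phi/h)$ and optimise over the one-dimensional profile $\chi_0$ (indeed obtaining the minimiser $\chi_0(u)=e^{2u}$ and the constant $2$ in front of $h\|(\partial_\nu\phi)^{1/2}f\|^2_{\partial\Omega}$), whereas the paper writes the cutoff in tubular coordinates, supports it in a layer of width $\varepsilon=h|\ln h|$, and minimises the corresponding one-dimensional Dirichlet energy on that finite interval, leading to \eqref{eq.rho}. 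The two agree to leading order; the paper's compact layer makes the error budget cleaner, since the optimal exponential profile has tails reaching $x_{\min}$ and in your version one must argue separately that $(1-\chi)(x_{\min})=1+o(1)$ (which holds because $\phi_{\min}<0$) and that $\partial_{\bar z}\chi$ does not contaminate the interior. Two points to tidy up. First, the trial state need only be in $H^1_0(\Omega)$, the form domain, not $H^1_0\cap H^2$; with $f\in\mathscr E\subset H^1(\Omega)$ only, you cannot claim $u\in H^2$ and do not need to. Second, you have flagged but not carried out the error control, which is where the technical weight of the proof lies: the paper uses the choice $\varepsilon=h|\ln h|$, the Schwartz decay $\phi-\hat\phi_0\in\mathscr S(\overline\Omega)$ for uniformity in $s$ along the unbounded guide, and $f\in H^1(\Omega)$ to show $\int_{\R}\big(\|\tilde f(s,\cdot)\|^2_{L^2([\delta-\varepsilon,\delta])}+\|\partial_t\tilde f(s,\cdot)\|^2_{L^2([\delta-\varepsilon,\delta])}\big)\dd s\to 0$, which substitutes for any smoothness of $f$ up to $\partial\Omega$.
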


\begin{remark}
	~
\begin{enumerate}[\rm i)]
\item The set $\mathscr{E}$ is not empty as we can see by considering a function of the form $f\colon z\mapsto c(z-z_1)^{-2}$ with $z_1\notin\overline{\Omega}$ and $c$ such that $f(x_{\min})=1$.
\item Due to a classical trace theorem and the fact that $\partial_\nu \phi$ is bounded, $J$ is finite.
\item The fact that $\phi$ has a unique minimum (which is non degenerate) can be ensured under explicit assumptions on the curvature $\kappa$ and on the width of the waveguide, see Proposition \ref{prop.assumptionvalid} below. 
\item By using Proposition \ref{prop.ess} and under the assumption on $\phi$ in Theorem \ref{thm.main}, we have $\inf \mathrm{sp}(\mathscr P_h) < \inf \mathrm{sp}_{\mathsf{ess}}(\mathscr P_h)$.
\end{enumerate}
\end{remark}

Our proof of Theorem \ref{thm.main} is based on extensions of strategies used in \cite{BLTRS21}\footnote{motivated by the seminal works \cite{EKP16} and \cite{HP17}.}, where the asymptotic simplicity of the low-lying eigenvalues is established, under generic assumptions on $\Omega$. Let us emphasize that, in \cite{BLTRS21}, $\Omega$ is assumed to be \emph{bounded} and that the assumption on $\phi$ can be ensured, in the uniform magnetic field case, when $\Omega$ is \emph{strictly convex} (thanks to the works by Kawohl \cite{K85, K85b}). In the present setting, $\Omega$ is neither bounded, nor convex. Moreover, in our unbounded setting, one needs to be very careful since the functional spaces (such as the Hardy spaces) involved in \cite{BLTRS21} are no more obviously well-defined. The study of such spaces on strips\footnote{which started a long time ago, see, for instance, \cite{W42}.} has an interest of its own and their use to deduce precise spectral asymptotics will be the object of a future work. Fortunately, we do not need them to disprove Duclos-Exner's conjecture.

To complete our analysis, it remains to give a sufficient condition under which the assumption of Theorem \ref{thm.main} is satisfied.

\begin{proposition}\label{prop.assumptionvalid}
Assume that $\kappa\in\mathscr{C}^\infty_0(\R)$ and that $\kappa^2$ has a unique maximun, which is non-degenerate. There exists $\delta_0>0$ such that, for all $\delta\in(0,\delta_0)$, $\phi$ has a unique minimum in $\Omega$, which is non-degenerate. Moreover, $\phi_{\min}<(\phi_0)_{\min}$.	
\end{proposition}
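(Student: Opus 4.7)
My plan is to work in the curvilinear coordinates $(s,t)$ via $\Theta$ and derive a two-term asymptotic expansion of $\tilde\phi := \phi\circ\Theta$ around $\tilde\phi_0 := \hat\phi_0\circ\Theta = (t^2-\delta^2)/2$ in powers of $\delta$. Writing $\psi := \tilde\phi - \tilde\phi_0$, the equation $\Delta\phi = 1$ becomes in these coordinates
\[
\frac{\partial_{ss}\psi}{1-t\kappa} + \frac{t\kappa'(s)\,\partial_s\psi}{(1-t\kappa)^2} - \kappa(s)\,\partial_t\psi + (1-t\kappa)\,\partial_{tt}\psi = t\kappa(s)
\]
on $\Omega_0$, with $\psi|_{t=\pm\delta} = 0$ and $\psi\in\mathscr{S}(\overline{\Omega_0})$ by Proposition \ref{prop.phi}. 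Rescaling $\tau = t/\delta \in (-1,1)$ and matching powers of $\delta$ in the ansatz $\psi(s,\delta\tau) = \delta^3\psi_1(s,\tau) + \delta^4\psi_2(s,\tau) + r_\delta(s,\tau)$ yields ODEs in $\tau$ with homogeneous Dirichlet data at $\tau = \pm 1$, whose unique solutions are
\[
\psi_1 = \frac{\kappa(s)}{6}(\tau^3-\tau), \qquad \psi_2 = \frac{\kappa(s)^2}{24}(3\tau^2+1)(\tau^2-1).
\]
Both vanish outside $\mathrm{supp}\,\kappa$, and a direct verification shows that $r_\delta$ satisfies the same PDE with source of order $\delta^5$ and Dirichlet data; I would combine the maximum principle on the strip $\Omega_0$ with Schauder estimates on a large bounded region containing $\mathrm{supp}\,\kappa$ to obtain $\|r_\delta\|_{C^2} = O(\delta^5)$ uniformly.

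Next, since $\partial_{tt}\tilde\phi = 1 + O(\delta)$ uniformly, the function $t\mapsto\tilde\phi(s,t)$ is strictly convex on $(-\delta,\delta)$ for every $s$, hence admits a unique minimizer $t_*(s) = \delta^2\kappa(s)/6 + O(\delta^4)$. Substituting and using the explicit form of $\psi_1, \psi_2$, a direct computation yields
\[
\Phi(s) := \tilde\phi(s,t_*(s)) = -\frac{\delta^2}{2} - \frac{\delta^4}{18}\kappa(s)^2 + O(\delta^5),
\]
with control in $C^2(\R)$. The hypothesis on $\kappa^2$ implies that $-\kappa^2/18$ has a unique non-degenerate global minimum at some $s^*$ with $\kappa(s^*)^2 > 0$, and tends to $0$ at infinity; a standard stability/continuity argument then gives that for $\delta$ small enough, $\Phi$ has a unique non-degenerate global minimum at some $s_{\min,\delta} \to s^*$.

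By strict convexity in $t$, every critical point $(s_0, t_0)$ of $\tilde\phi$ satisfies $t_0 = t_*(s_0)$, so the global minima of $\tilde\phi$ coincide with those of $\Phi$. Thus $x_{\min} := \Theta(s_{\min,\delta}, t_*(s_{\min,\delta}))$ is the unique minimizer of $\phi$ on $\Omega$; the envelope identity $\det\mathrm{Hess}\,\tilde\phi = \Phi''(s_{\min,\delta})\cdot\partial_{tt}\tilde\phi$ at this point, together with positivity of both factors, gives non-degeneracy. Finally, $\phi_{\min} = \Phi(s_{\min,\delta}) \leq -\delta^2/2 - \delta^4\kappa(s^*)^2/18 + O(\delta^5) < -\delta^2/2 = (\phi_0)_{\min}$ for small $\delta$.

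The hard part will be the uniform $C^2$ control of $r_\delta$: the rescaled operator $\delta^2 L_\delta$ degenerates in the $s$-variable as $\delta \to 0$ (only $\partial_{\tau\tau}$ survives at leading order), so classical elliptic estimates do not apply uniformly in $\delta$. The compact support of $\kappa$ is essential here, because outside a large enough interval in $s$ the equation for $r_\delta$ trivializes by the uniqueness part of Proposition \ref{prop.phi}, reducing the analysis to a bounded region where Schauder theory is standard.
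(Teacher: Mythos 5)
Your expansion is correct and is essentially the same as the paper's, up to a different but equivalent normalization. The paper conjugates by $a^{1/2}$, defines $\psi = \delta^{-2}a_\delta^{1/2}\tilde\phi(s,\delta\tau)$, and expands as $\psi_0 + \delta\psi_1 + \delta^2\psi_2 + \cdots$; transporting by $a^{-1/2}$ one recovers exactly your $\psi_1 = \tfrac{\kappa}{6}(\tau^3-\tau)$ and $\psi_2 = \tfrac{\kappa^2}{24}(3\tau^2+1)(\tau^2-1)$, and the two expansions agree term by term. Your reduction to the one-dimensional envelope $\Phi(s) = \tilde\phi(s,t_*(s))$, using strict convexity in $t$ and the Schur-complement identity $\det\mathrm{Hess}\,\tilde\phi = \Phi''\,\partial_{tt}\tilde\phi$ at a critical point, is a clean alternative to the paper's direct estimate of the three Hessian entries (which requires tracking the finer $\mathscr{O}(\delta^{1/2})$ localization of $s_1$ near $s_0$). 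The explicit value $\Phi(s) = -\tfrac{\delta^2}{2} - \tfrac{\delta^4\kappa^2}{18} + \mathscr{O}(\delta^5)$ matches the paper's computation.

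The genuine gap is exactly the one you flag, and your proposed fix does not repair it. You stop at a two-term ansatz $\delta^3\psi_1 + \delta^4\psi_2$, so the source for $r_\delta$ is $\mathscr{O}(\delta^5)$; but the rescaled operator is $\sim (1-\delta\tau\kappa)\partial_{\tau\tau} + \delta^2\,(\text{coeff.})\,\partial_{ss} + \cdots$, whose ellipticity ratio is $\delta^2$. Any Schauder or Sobolev elliptic estimate therefore costs a factor $\delta^{-2}$ in the $s$-derivatives, and a maximum-principle $L^\infty$ bound of size $\delta^5$ together with Schauder then yields only $\|r_\delta\|_{\mathscr{C}^2} = \mathscr{O}(\delta^3)$, not $\mathscr{O}(\delta^5)$. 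Restricting to a bounded set containing $\mathrm{supp}\,\kappa$ does not help: the degeneracy is in the operator, not the domain. Also, outside $\mathrm{supp}\,\kappa$ the equation becomes homogeneous but $r_\delta$ does \emph{not} trivialize — by Proposition \ref{prop.phi}, $\phi - \hat\phi_0 = -f_0$ is only Schwartz, not compactly supported — so one still needs to argue on the full strip (e.g.\ via a barrier and the decay at infinity). The paper resolves the $\delta^{-2}$ loss by carrying the ansatz two orders further, to $\Psi_5 = \psi_0 + \cdots + \delta^4\psi_4$: then the residual source is $\mathscr{O}(\delta^5)$ in $H^2$, elliptic regularity gives $\|\psi - \Psi_5\|_{H^4} = \mathscr{O}(\delta^3)$ (the expected loss absorbed), Sobolev embedding gives $\mathscr{O}(\delta^3)$ in $\mathscr{C}^2$, and only then one truncates back to $\Psi = \psi_0 + \delta\psi_1 + \delta^2\psi_2$ since $\delta^3\psi_3 + \delta^4\psi_4$ are themselves $\mathscr{O}(\delta^3)$ in $\mathscr{C}^2$. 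To complete your argument you would need the same device: compute (or at least assert the existence and regularity of) two more correctors before invoking the degenerate elliptic estimate.
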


Theorem \ref{th.main} follows from Proposition \ref{prop.ess}, Proposition \ref{prop.phi}, Theorem \ref{thm.main} and Proposition \ref{prop.assumptionvalid}. Due to our motivation to disprove a conjecture from the nineties, we provide the reader with rather self-contained proofs (and sometimes recall basic arguments). In Section \ref{sec.2}, we analyze the essential spectrum and we prove Proposition \ref{prop.ess}. In Section \ref{sec.phi}, the existence of the function $\phi$ is established and we prove Propositions \ref{prop.phi} and \ref{prop.assumptionvalid}. In Section \ref{sec.proof}, we prove Theorem \ref{thm.main}.

\section{The Essential Spectrum}\label{sec.2}

In this section, we prove Proposition \ref{prop.ess}, which follows from Lemmas \ref{prop.Ph0} and \ref{lem.Weyl}.
We first recall a classical result.
\begin{lemma}\label{lem.ephih}
Let $\phi \in \mathcal C^\infty(\overline \Omega)$ be bounded with bounded derivatives and $A = \nabla \phi^\bot$. For all $\psi\in H^1_0(\Omega)$, we have
\[\|(-ih\nabla-\mathbf{A})\psi\|_{L^2(\Omega)}^2-h\|\psi\|_{L^2(\Omega)}^2=4h^2\int_{\Omega}e^{-2\phi/h}|\partial_{\overline{z}} u|^2\mathrm{d}x\,,\]
where $u:=e^{\phi/h}\psi\in H^1_0(\Omega)$.
\end{lemma}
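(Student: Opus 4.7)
The natural approach is to factor the shifted magnetic quadratic form through the ``annihilation-type'' first-order operator
\[
L := (-ih\partial_{x_1} - A_1) + i(-ih\partial_{x_2} - A_2) = -2ih\,\partial_{\overline z} - (A_1+iA_2)\,,
\]
and then to observe that, thanks to the gauge $\mathbf A = \nabla\phi^\bot$, conjugating $L$ by $e^{\phi/h}$ cancels the zeroth-order piece and leaves only a multiple of the Cauchy--Riemann operator $\partial_{\overline z}$.

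I would proceed in two steps. First, setting $a := -ih\partial_{x_1} - A_1$ and $b := -ih\partial_{x_2} - A_2$ (which are symmetric on $H^1_0(\Omega)$ by integration by parts), a standard commutator computation gives
\[
[a,b] = ih(\partial_{x_1}A_2 - \partial_{x_2}A_1) = ih\,\Delta\phi = ih\,,
\]
where the last equality uses that $\Delta\phi = 1$. As a quadratic form on $H^1_0(\Omega)$ this yields
\[
L^*L \,=\, a^2 + b^2 + i[a,b] \,=\, (-ih\nabla - \mathbf A)^2 - h\,,
\]
hence
\[
\|L\psi\|^2 = \|(-ih\nabla - \mathbf A)\psi\|^2 - h\|\psi\|^2\,.
\]

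Second, the boundedness of $\phi$ and $\nabla\phi$ makes the map $\psi \mapsto u = e^{\phi/h}\psi$ a bijection of $H^1_0(\Omega)$ onto itself. The product rule gives, for $j=1,2$,
\[
(-ih\partial_{x_j}-A_j)\psi \,=\, e^{-\phi/h}\bigl(-ih\partial_{x_j} u + (i\partial_{x_j}\phi - A_j) u\bigr)\,,
\]
and plugging $A_1 = -\partial_{x_2}\phi$, $A_2 = \partial_{x_1}\phi$ into the combination $L = a + ib$, the zeroth-order terms cancel pairwise, leaving
\[
L\psi = -2ih\,e^{-\phi/h}\,\partial_{\overline z} u\,.
\]
Taking squared $L^2$-norms and combining with the first step yields the announced identity.

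The computation is genuinely routine; the only minor points to verify are the symmetry of $a$ and $b$ on $H^1_0(\Omega)$, which underlies the identity $\|L\psi\|^2 = \langle L^*L\psi,\psi\rangle$, and the stability of $H^1_0(\Omega)$ under multiplication by $e^{\pm\phi/h}$. Both follow immediately from the boundedness assumptions on $\phi$ and its derivatives, so no serious obstacle is expected.
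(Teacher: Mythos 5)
Your proof is correct and follows essentially the same route as the paper's: both factor the shifted magnetic energy through $L = L_1 + iL_2$ (your $a + ib$), both use the commutator $[L_1,L_2] = ih(\partial_1 A_2 - \partial_2 A_1) = ih\Delta\phi$ to account for the $-h$ term, and both use the gauge $\mathbf A = \nabla\phi^\bot$ so that conjugating $L$ by $e^{\phi/h}$ cancels the zeroth-order part and leaves $-2ih\,\partial_{\bar z}$. The paper writes the same computation in the opposite direction (starting from the right-hand side and unwinding), while your presentation via $L^*L = a^2 + b^2 + i[a,b]$ packages the cross-term argument more compactly. One small remark: you explicitly invoke $\Delta\phi = 1$, which is indeed what is needed for $[a,b] = ih$; the statement of the lemma does not list this hypothesis, but it is implicit in the paper's standing assumption \eqref{eq.rotA} together with $\mathbf A = \nabla\phi^\bot$, and the paper's own proof uses it in exactly the same silent way. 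You were right to flag it.
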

\begin{proof}
We have 
\[\begin{split}
4h^2\int_{\Omega}e^{-2\phi/h}|\partial_{\overline{z}} u|^2\mathrm{d}x&=\int_{\Omega}|e^{-\phi/h}(h\partial_{1}+ih\partial_2) u|^2\mathrm{d}x\\
&=\int_{\Omega}|(h\partial_{1}+ih\partial_2) e^{-\phi/h}u-[h\partial_1+i\partial_2,e^{-\phi/h}]u|^2\mathrm{d}x\\
&=\int_{\Omega}|(h\partial_{1}+i\partial_2\phi+ih\partial_2+\partial_1\phi) \psi|^2\mathrm{d}x\\
&=\int_{\Omega}|(h\partial_{1}-iA_1+ih\partial_2+A_2) \psi|^2\mathrm{d}x\\
&=\int_{\Omega}|(L_1+iL_2) \psi|^2\mathrm{d}x\,,\quad L_j=-ih\partial_j-A_j\,.\\
\end{split}\]
Then, we get
\[
\begin{split}
4h^2\int_{\Omega}e^{-2\phi/h}|\partial_{\overline{z}} u|^2\mathrm{d}x&=\|(-ih\nabla-\mathbf{A})\psi\|^2+2\Re\langle L_1\psi,iL_2\psi\rangle\\
&=\|(-ih\nabla-\mathbf{A})\psi\|^2+2\Im\langle L_1\psi,L_2\psi\rangle.\\
\end{split}
\]
Note that	
\begin{align*}
2\Im\langle L_1\psi,L_2\psi\rangle
& =2\Im\langle \psi,L_1L_2\psi\rangle\\
& =2\Im\langle \psi,L_2L_1\psi+[L_1,L_2]\psi\rangle\\
& =2\Im\langle L_2\psi,L_1\psi\rangle-2h\,.
\end{align*}
The conclusion follows.
\end{proof}

\begin{proposition}\label{prop.Ph0}
For all $h>0$ we have
\[\mathrm{sp}(\mathscr{P}_{h,0})=[\lambda_{\mathrm{ess}}(h),+\infty)\,,\]
and
\[\lambda_{\mathrm{ess}}(h)\geq \frac{(\pi h)^2}{4\delta^2} e^{-\delta^2/h}\,. \]		
\end{proposition}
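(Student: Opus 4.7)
My plan is to realize $\mathscr{P}_{h,0}$ as a direct integral over Fourier modes in the translation variable $s$, reduce the lower bound to the symmetric fiber $\tau = 0$ via a symmetric decreasing rearrangement, and then obtain the exponential bound at that fiber using the factorization underlying Lemma \ref{lem.ephih}. Since $\mathbf{A}_0(s,t) = (-t,0)$ and $\Omega_0$ are independent of $s$, the partial Fourier transform $\mathcal F_s$ diagonalises $\mathscr{P}_{h,0}$ as
\[
\mathcal F_s\, \mathscr{P}_{h,0}\, \mathcal F_s^{-1} = \int^{\oplus}_{\R} H_{h\xi}\, \mathrm d\xi, \qquad H_\tau := -h^2 \partial_t^2 + (t+\tau)^2 - h,
\]
acting on $L^2(-\delta,\delta)$ with Dirichlet boundary conditions. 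Each $H_\tau$ has compact resolvent; its first eigenvalue $\mu_1(\tau, h)$ is continuous in $\tau$, and since $(t+\tau)^2 \geq (|\tau|-\delta)^2$ on $(-\delta,\delta)$ we have $\mu_1(\tau, h) \to +\infty$ as $|\tau|\to\infty$. The first band $\mu_1(\R,h)$ is therefore a closed half-line that already covers the whole spectrum of the direct integral, so $\mathrm{sp}(\mathscr{P}_{h,0}) = [\lambda_{\mathrm{ess}}(h), +\infty)$ with $\lambda_{\mathrm{ess}}(h) = \inf_\tau \mu_1(\tau, h)$.

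To show that this infimum is attained at $\tau = 0$, I argue by symmetric decreasing rearrangement. Conjugating by the translation $s = t+\tau$, $H_\tau$ becomes $\widehat H := -h^2\partial_s^2 + s^2 - h$ on $L^2(I_\tau)$ with Dirichlet conditions, where $I_\tau = (-\delta+\tau, \delta+\tau)$. Let $u_\tau > 0$ be the ground state on $I_\tau$, extended by zero to $\R$, and let $u_\tau^*$ be its symmetric decreasing rearrangement; then $u_\tau^*$ is supported on $[-\delta, \delta]$, lies in $H^1_0(-\delta,\delta)$, and satisfies $\|u_\tau^*\|_{L^2} = \|u_\tau\|_{L^2}$. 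The Pólya--Szegő inequality gives $\|(u_\tau^*)'\|_{L^2} \leq \|u_\tau'\|_{L^2}$, while the Hardy--Littlewood rearrangement inequality applied to the symmetric increasing potential $V(s) = s^2$ gives $\int V\,(u_\tau^*)^2 \leq \int V\,u_\tau^2$. Using $u_\tau^*$ as a test function in the min--max principle then gives
\[
\mu_1(0, h) \leq \frac{\langle \widehat H\, u_\tau^*, u_\tau^*\rangle}{\|u_\tau^*\|^2} \leq \mu_1(\tau, h),
\]
so $\lambda_{\mathrm{ess}}(h) = \mu_1(0, h)$.

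It remains to bound $\mu_1(0, h)$ from below. The Dirac factorization $H_0 = (h\partial_t + t)^{*}(h\partial_t + t)$, together with the change of variable $g = e^{\phi_0/h}v$ for $\phi_0(t) = (t^2-\delta^2)/2$, preserves the Dirichlet condition (since $\phi_0$ vanishes on the boundary) and yields for $v \in H^1_0(-\delta,\delta)$
\[
\langle H_0 v, v\rangle = h^2 \int_{-\delta}^{\delta} e^{(\delta^2 - t^2)/h}\, |g'|^2\, \mathrm dt, \qquad \|v\|^2 = \int_{-\delta}^{\delta} e^{(\delta^2 - t^2)/h}\, |g|^2\, \mathrm dt,
\]
which is the one-dimensional incarnation of Lemma \ref{lem.ephih}. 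Since the weight $e^{(\delta^2 - t^2)/h}$ lies between $1$ (at $t = \pm\delta$) and $e^{\delta^2/h}$ (at $t = 0$), combining with the Dirichlet--Poincaré inequality $\int (g')^2 \geq (\pi/(2\delta))^2 \int g^2$ gives $\mu_1(0,h) \geq (\pi h)^2/(4\delta^2)\cdot e^{-\delta^2/h}$, which is the claimed bound. The central difficulty is the monotonicity step: the natural fiberwise Dirac factorization with weight $(t+\tau)^2/2$ yields a bound that degrades like $e^{-(\delta+|\tau|)^2/h}$ and is not uniform in $\tau$, so rearrangement---exploiting the symmetry of $s^2$ around its minimum---is essential to reduce the problem to the single favourable configuration $\tau = 0$.
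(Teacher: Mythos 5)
Your proof is correct, and the two initial steps (partial Fourier transform in $s$, conclusion that $\mathrm{sp}(\mathscr{P}_{h,0})$ is a closed half-line because the fiber eigenvalues diverge as $|\tau|\to\infty$) match the paper. Where you diverge is in the lower bound: you prove the sharper statement that the band minimum $\inf_\tau \mu_1(\tau,h)$ is attained at $\tau=0$, via a symmetric decreasing rearrangement argument on the translated fiber $-h^2\partial_s^2 + s^2 - h$ on $I_\tau=(-\delta+\tau,\delta+\tau)$ (Pólya--Szegő for the kinetic term, the layer-cake/Hardy--Littlewood inequality for the symmetric increasing potential $s^2$), and then estimate the centred fiber via the $1$-D factorization $H_0=(-h\partial_t+t)(h\partial_t+t)$ with the substitution $g=e^{\phi_0/h}v$. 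The paper instead never localizes in $\tau$: it returns to the two-dimensional Rayleigh quotient on $\Omega_0$, applies Lemma~\ref{lem.ephih} with the $s$-independent weight $\phi_0(s,t)=(t^2-\delta^2)/2$ (which is the gauge generating $\mathbf A_0=\nabla\phi_0^\perp$ and vanishes on the whole boundary), and uses $1\le e^{-2\phi_0/h}\le e^{\delta^2/h}$ together with $4\|\partial_{\bar z}u\|^2=\|\nabla u\|^2\ge\lambda_1^{\mathrm{Dir}}((-\delta,\delta))\|u\|^2$. Your route buys the extra information that the infimum over the fibers is at the symmetric position $\tau=0$ (which you correctly observe cannot be obtained by a naive fiberwise Dirac factorization for $\tau\neq 0$, since $(t+\tau)^2/2$ cannot be adjusted by a constant to vanish at both endpoints); the paper's route sidesteps that whole question by working two-dimensionally, where the weight $\phi_0$ naturally vanishes on all of $\partial\Omega_0$, and is consequently shorter.
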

\begin{proof}
By using the Fourier transform, we have
\[\mathscr{P}_{h,0}=\int^\oplus \mathscr{P}_{h,0,\xi}\dd\xi\,,\]
where the operator
\[\mathscr{P}_{h,0,\xi}=-h^2\partial_t^2+(\xi+t)^2-h\]
is equipped with the Dirichlet conditions at $t=\pm\delta$. Let us denote by $(\gamma_n(\xi,h))_{n\geq 1}$ the increasing sequence of its eigenvalues. A straightforward application of the min-max theorem shows that, for all $h>0$,
\[\lim_{\xi\to\pm\infty}\gamma_n(\xi,h)=+\infty\,.\]
We get
\[\mathrm{sp}(\mathscr{P}_{h,0})=[\min_{\xi\in\R}\gamma_1(\xi,h),+\infty)=\mathrm{sp}_{\mathrm{ess}}(\mathscr{P}_{h,0})\,.\]
By the min-max principle, we have
\[\inf \mathrm{sp}(\mathscr{P}_{h,0})=\inf_{\psi\in H^1_0(\Omega_0)\setminus\{0\}}\frac{\|(-ih\nabla-\mathbf{A}_0)\psi\|^2-h\|\psi\|^2}{\|\psi\|^2}\,,\]
and, by letting $\psi=e^{-\phi_0/h}u$, we get
\[\inf \mathrm{sp}(\mathscr{P}_{h,0})=\inf_{u\in H^1_0(\Omega_0)\setminus\{0\}}\frac{4h^2\|e^{-\phi_0/h}\partial_{\overline{z}}u\|^2}{\|e^{-\phi_0/h}u\|^2}\,.\]
This allows to get the rough lower bound
\begin{align*}
\inf \mathrm{sp}(\mathscr{P}_{h,0})
& \geq e^{-\delta^2/h}\inf_{u\in H^1_0(\Omega_0)\setminus\{0\}}\frac{4h^2\|\partial_{\overline{z}}u\|^2}{\|u\|^2}\\
& \geq h^2e^{-\delta^2/h}\lambda^{\mathrm{Dir}}_1((-\delta,\delta))\\
& \geq \frac{(\pi h)^2}{4\delta^2} e^{-\delta^2/h} \,.
\end{align*}
This last argument already appeared in \cite[Theorem 3.1]{HP17}.
\end{proof}

Let us recall the following classical result.
\begin{lemma}\label{lem.Weylgen}
Consider $(T_1,\mathrm{Dom}(T_1))$ and $(T_2,\mathrm{Dom}(T_2))$ two closed operators on a Banach space $E$. Assume that there exists $z_0\in\rho(T_1)\cap\rho(T_2)$ such that the operator 
$K : (T_1-z_0)^{-1}-(T_2-z_0)^{-1} : E\to E$ is compact. Then,
\[\mathrm{sp}_{\mathrm{ess}}(T_1)=\mathrm{sp}_{\mathrm{ess}}(T_2)\,.\]
\end{lemma}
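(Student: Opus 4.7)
The plan is to reduce the statement to the classical Weyl theorem for bounded operators via a spectral mapping argument for the resolvent. The key idea is that, since the difference of resolvents $R_j := (T_j - z_0)^{-1}$ is compact, the bounded operators $R_1$ and $R_2$ share the same essential spectrum, and a spectral mapping argument then transfers the equality to $T_1$ and $T_2$.

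For the bounded step, I would invoke the stability of the class of Fredholm operators of index zero in $\mathcal{L}(E)$ under compact perturbations. For any $\mu \in \C$, the identity $R_2 - \mu = (R_1 - \mu) - K$ shows that $R_1 - \mu$ is Fredholm if and only if $R_2 - \mu$ is. Hence
\[\mathrm{sp}_{\mathrm{ess}}(R_1) = \mathrm{sp}_{\mathrm{ess}}(R_2).\]

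For the spectral mapping step, I would set $\mu := (\lambda - z_0)^{-1}$ for $\lambda \neq z_0$ and rely on the elementary factorization
\[T_j - \lambda = -(\lambda - z_0)(R_j - \mu)(T_j - z_0),\]
valid on $\mathrm{Dom}(T_j)$. Since $T_j - z_0 \colon \mathrm{Dom}(T_j) \to E$ is a bijection with bounded inverse, the Fredholm property transfers through this factorization: $T_j - \lambda$ is Fredholm of index zero if and only if $R_j - \mu$ is. This yields the spectral correspondence
\[\lambda \in \mathrm{sp}_{\mathrm{ess}}(T_j) \setminus \{z_0\} \iff (\lambda - z_0)^{-1} \in \mathrm{sp}_{\mathrm{ess}}(R_j) \setminus \{0\}.\]
Since $z_0 \in \rho(T_1) \cap \rho(T_2)$ by hypothesis, neither essential spectrum contains $z_0$, and combining with the first step gives $\mathrm{sp}_{\mathrm{ess}}(T_1) = \mathrm{sp}_{\mathrm{ess}}(T_2)$.

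No serious obstacle arises: the lemma is the standard packaging of Weyl's invariance theorem in terms of the resolvent. The only point requiring care is to fix a specific notion of essential spectrum on the Banach space $E$ (any of the usual Fredholm-type variants will do) and to check the bookkeeping at the exceptional points $z_0$ and $0$ of the spectral mapping correspondence; both are handled exactly as in the classical argument for the resolvent.
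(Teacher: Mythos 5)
Your proof is correct, and at the core it uses exactly the same ingredients as the paper's: the factorization $T_j-\lambda=\bigl(\mathrm{Id}+(z_0-\lambda)R_j\bigr)(T_j-z_0)=-(\lambda-z_0)(R_j-\mu)(T_j-z_0)$, the fact that $T_j-z_0$ is a bijection (hence Fredholm of index $0$) from $\mathrm{Dom}(T_j)$ to $E$, stability of Fredholm-index-$0$ under composition, and invariance under compact perturbation. The only difference is the packaging. The paper avoids any excursion through the essential spectrum of the resolvents: it substitutes $R_2=R_1-K$ directly inside the factorization to obtain $T_2-\lambda=\bigl((\lambda-z_0)K+(T_1-\lambda)R_1\bigr)(T_2-z_0)$ and reads off the implication $\lambda\notin\mathrm{sp}_{\mathrm{ess}}(T_1)\Rightarrow\lambda\notin\mathrm{sp}_{\mathrm{ess}}(T_2)$ in one pass. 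You instead split it into (i) bounded Weyl for $R_1,R_2$ and (ii) a spectral-correspondence step between $T_j$ and $R_j$, which is the textbook route and is perhaps more modular, but requires you to track the exceptional points $z_0$ and $\mu=0$ in the spectral mapping, as you correctly note. Either style is fine; the paper's is marginally more self-contained because it never has to speak of $\mathrm{sp}_{\mathrm{ess}}(R_j)$ at all.
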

\begin{proof}
Let us recall the proof and note that it does not require the selfadjointness of $T_1$ or $T_2$. We recall that $\lambda\in\mathrm{sp}_{\mathrm{ess}}(T_1)$ if and only if $T_1-\lambda$ is not a Fredholm operator with index $0$.	

Consider $\lambda\notin\mathrm{sp}_{\mathrm{ess}}(T_1)$ and write
\[\begin{split}
T_2-\lambda=T_2-z_0+(z_0-\lambda)&=\left(\mathrm{Id}+(z_0-\lambda)(T_2-z_0)^{-1}\right)(T_2-z_0)\\
&=\left(\mathrm{Id}+(\lambda-z_0)K+(z_0-\lambda)(T_1-z_0)^{-1}\right)(T_2-z_0)\\
&=\left((\lambda-z_0)K+(T_1-\lambda)(T_1-z_0)^{-1}\right)(T_2-z_0)\,.
\end{split}
\]
Now, notice that $T_2-z_0 : \mathrm{Dom}(T_2)\to E$ is Fredholm with index $0$ (since it is bijective). The operator $(T_1-z_0)^{-1} : E\to \mathrm{Dom}(T_1)$ is also bijective and thus Fredholm with index $0$. Therefore $(T_1-\lambda)(T_1-z_0)^{-1} : E\to E$ is also Fredholm with index $0$ (see \cite[Corollary 5.7]{CR21}). Since $K$ is compact, 
\[(\lambda-z_0)K+(T_1-\lambda)(T_1-z_0)^{-1}\] 
is still Fredholm with index $0$ (see \cite[Corollary 5.9]{CR21}). Thus, $T_2-\lambda$ is Fredholm with index $0$ (again by \cite[Corollary 5.7]{CR21}).
\end{proof}

Thanks to Lemma \ref{lem.Weylgen}, it is rather easy to get the following.

\begin{lemma}\label{lem.Weyl}
For all $h>0$, we have $\mathrm{sp}_{\mathrm{ess}}(\mathscr{P}_h)=\mathrm{sp}_{\mathrm{ess}}(\mathscr{P}_{h,0})$.	
\end{lemma}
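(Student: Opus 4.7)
The principal obstacle is that $\mathscr{P}_h$ and $\mathscr{P}_{h,0}$ act on different Hilbert spaces, so before invoking Lemma \ref{lem.Weylgen} I first transport $\mathscr{P}_h$ unitarily to $L^2(\Omega_0)$ in such a way that the transported operator coincides with $\mathscr{P}_{h,0}$ outside a compact set. Since $\Theta:\Omega_0\to\Omega$ has Jacobian $1-t\kappa(s)$, positive for $\delta$ small enough, the formula $(Uf)(s,t)=(1-t\kappa(s))^{1/2}f(\Theta(s,t))$ defines a unitary $U:L^2(\Omega)\to L^2(\Omega_0)$, and $\widetilde{\mathscr{P}}_h := U\mathscr{P}_h U^{-1}$ is the magnetic Laplacian in tubular coordinates, self-adjoint on $H^2(\Omega_0)\cap H^1_0(\Omega_0)$.

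Since $\kappa$ is supported in some interval $[-R,R]$, the restriction of $\Theta$ to each of the two connected components of $\{|s|>R\}\cap\Omega_0$ is a rigid motion with unit Jacobian. There, $\widetilde{\mathscr{P}}_h$ equals a magnetic Laplacian on a straight half-strip with some smooth vector potential $\widetilde{\mathbf{A}}$ whose curl is $1$. As the half-strips are simply connected, by gauge invariance there exists $\chi\in\mathscr{C}^\infty(\overline{\Omega_0})$ with $\widetilde{\mathbf{A}}=\mathbf{A}_0+\nabla\chi$ on both ends. Conjugating by $e^{i\chi/h}$ yields an operator $T_1 := e^{-i\chi/h}\widetilde{\mathscr{P}}_h e^{i\chi/h}$, unitarily equivalent to $\mathscr{P}_h$, which coincides with $T_2:=\mathscr{P}_{h,0}$ outside the compact set $K:=[-R,R]\times[-\delta,\delta]$.

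It then suffices to show $\mathrm{sp}_{\mathrm{ess}}(T_1)=\mathrm{sp}_{\mathrm{ess}}(T_2)$. Both $T_i$ are bounded below by $-h$, so we can pick a common real resolvent point $z_0$. Setting $R_i=(T_i-z_0)^{-1}$, the second resolvent identity gives $R_1-R_2 = R_1(T_2-T_1)R_2$. Since $T_2-T_1$ is a second-order differential operator with coefficients supported in $K$, choosing a cutoff $\chi_0\in\mathscr{C}_c^\infty(\R^2)$ with $\chi_0\equiv 1$ on a neighborhood of $K$ we may insert $\chi_0$ on both sides of $T_2-T_1$:
\[
R_1-R_2 = R_1\chi_0 \cdot (T_2-T_1)\chi_0 R_2.
\]
The right factor is bounded on $L^2(\Omega_0)$ (composition of $R_2:L^2\to H^2$, multiplication by $\chi_0$, and a second-order operator $H^2\to L^2$). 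The left factor $R_1\chi_0$ is compact: its adjoint $\overline{\chi_0}R_1^*$ takes values in $H^2$-functions supported in the bounded Lipschitz domain $\mathrm{supp}(\chi_0)\cap\overline{\Omega_0}$, which embeds compactly in $L^2$ by Rellich--Kondrachov. Hence $R_1-R_2$ is compact, and Lemma \ref{lem.Weylgen} concludes.

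The main work lies in producing the unitary $U$ and the global gauge function $\chi$; once $T_1$ and $T_2$ are realised on the same space and agree outside $K$, the compactness argument is essentially routine elliptic regularity.
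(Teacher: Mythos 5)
Your proof is correct and follows the same overall strategy as the paper: transfer $\mathscr P_h$ unitarily to an operator on $L^2(\Omega_0)$ that agrees with $\mathscr P_{h,0}$ away from a compact set, then invoke Lemma~\ref{lem.Weylgen} by proving compactness of the resolvent difference. There are two genuine differences in execution. First, you make the alignment of vector potentials at infinity an explicit gauge change $e^{i\chi/h}$; the paper gets the same effect implicitly by writing the transferred operator with the curvature-adapted gauge $\tilde A(s,t)=t-\kappa(s)t^2/2$, which already reduces to the $\mathbf A_0$ form outside $\operatorname{supp}\kappa$. Your variant is more gauge-agnostic but quietly requires $\chi$ to have bounded derivatives so that $e^{i\chi/h}$ preserves $H^2\cap H^1_0$; this does hold because $\mathbf{A}$ is assumed bounded with bounded derivatives, but it deserves a sentence. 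Second, and this is where you genuinely simplify, your compactness argument avoids the paper's decomposition $T_2-T_1=W_1D_s^2+W_2D_s+W_3$ with its separate commutator computation for the $W_1D_s^2$ term and its appeal to the Kolmogorov--Riesz theorem. Instead you insert the cutoff $\chi_0$ on both sides and factor $R_1-R_2=(R_1\chi_0)\bigl((T_2-T_1)\chi_0 R_2\bigr)$, putting all the compactness on $R_1\chi_0$ (compact since $(R_1\chi_0)^*=\chi_0 R_1$ maps $L^2$ boundedly into compactly supported $H^2$ functions, to which Rellich--Kondrachov applies) and only boundedness on the second factor. This sidesteps the issue of the second-order term $W_1D_s^2$ entirely. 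The one ingredient you should note you are using is that $\operatorname{Dom}(T_1)=H^2\cap H^1_0(\Omega_0)$ with the graph norm equivalent to the $H^2$ norm — which holds since the conjugating unitaries are bounded on $H^2$ with bounded inverses — so that $R_1:L^2\to H^2$ is bounded.
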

\begin{proof}
The operator $\mathscr{P}_h$ is unitarily equivalent to the selfadjoint operator $\widetilde{\mathscr{P}}_h$	(on $L^2(\Omega_0,\mathrm{d}s\mathrm {d}t)$) given by
\[\widetilde{\mathscr{P}}_h=-\partial^2_t+(a^{-\frac12}(D_s-\tilde A(s,t)) a^{-\frac12})^2-\frac{\kappa^2}{4a^2}-h\,,\quad a(s,t)=1-t\kappa(s)\,,\]
where $\tilde A(s,t)=t-\kappa(s)\frac{t^2}{2}$.
Since $\kappa$ is compactly supported, we see that $\widetilde{\mathscr{P}}_h$ acts as $\mathscr{P}_{h,0}$ away from a compact set.

Let us now apply Lemma \ref{lem.Weylgen} with $T_1=\mathscr{P}_{h,0}$, $T_2=\widetilde{\mathscr{P}}_h$ and $z_0=i$. The resolvent formula gives
\[K=(T_1-z_0)^{-1}(T_2-T_1)(T_2-z_0)^{-1}\,.\]
In our case, we have
\[T_2-T_1=a^{-\frac12}\left[(D_s-\tilde A) a^{-1}(D_s-\tilde A)\right]a^{-\frac12}-(D_s-t)^2-\frac{\kappa^2}{4a^2}\,.\]
Computing some commutators shows that we can find three smooth functions on $\overline{\Omega_0}$, compactly supported with respect to $s$, $W_1$, $W_2$ and $W_3$ such that
\[T_2-T_1=W_1(s,t)D^2_s+W_2(s,t)D_s+W_3(s,t)\,.\]
Then, by elliptic regularity and the Kolmogorov-Riesz theorem (see \cite[Theorem 4.14 \& Remark 4.15]{CR21}), we notice that $W(\widetilde{\mathscr{P}}_h-i)^{-1} : L^2(\Omega_0)\to H^1(\Omega_0)$ is compact for all $W\in\mathscr{C}^\infty_0(\overline{\Omega_0})$. This shows that the terms involving $W_2$ and $W_3$ in $K$ are compact operators on $L^2(\Omega_0)$ (by using that the set of compact operators forms an ideal). Concerning the term involving $W_1$, we notice, on the one hand, that $D^2_s(\widetilde{\mathscr{P}}_h-i)^{-1}$ is bounded on $L^2(\Omega_0)$ and, on the other hand, that $(\mathscr{P}_{h,0}-i)^{-1}W_1 : L^2(\Omega_0)\to L^2(\Omega_0)$ is compact since the operators
\[[(\mathscr{P}_{h,0}-i)^{-1},W_1]=-(\mathscr{P}_{h,0}-i)^{-1}[\mathscr{P}_{h,0},W_1](\mathscr{P}_{h,0}-i)^{-1}\]
and $W_1(\mathscr{P}_{h,0}-i)^{-1} :  L^2(\Omega_0)\to L^2(\Omega_0)$ are compact.

Applying Lemma \ref{lem.Weylgen}, the conclusion follows.
\end{proof}

\section{On the function \texorpdfstring{$\phi$}{phi}}\label{sec.phi}

In this section we prove Propositions \ref{prop.phi} and \ref{prop.assumptionvalid}. We recall that $\phi_0$ and $\tilde \phi_0$ were defined before Proposition \ref{prop.phi}.

\subsection{Proof of Proposition \ref{prop.phi}}

Assume that two functions $\phi_1$ and $\phi_2$ satisfy the conclusions of the proposition. Then $\phi_1 - \phi_2$ is harmonic in $\Omega$ and belongs to $H^1_0(\Omega)$.
This implies that $\phi_1 = \phi_2$ and gives uniqueness.

 Since the tube $\Omega$ is straight at infinity, we have $\Delta\hat \phi_0=1$ outside a compact set. In particular, $1-\Delta\hat \phi_0\in L^2(\Omega)$. By the Poincar\'e inequality (see, for instance, \cite{DE95} for the case of a waveguide) and the Riesz representation theorem, there exists a unique $f_0\in H^1_0(\Omega)$ such that 
 \[
 \forall \varphi\in H^1_0(\Omega), \quad \int_{\Omega}\nabla f_0\cdot\nabla\varphi\,\mathrm{d}x=\int_\Omega (1-\Delta\hat \phi_0) \varphi\,\mathrm{d}x
 \]
Then $-\Delta f_0 =1-\Delta\hat \phi_0$ in the sense of distributions, and $f_0$ belongs to $\mathscr C^\infty(\overline\Omega)$ by elliptic regularity.

Let $V=1-\Delta\hat \phi_0$ and consider a non-negative and bounded Lipschitzian function $\Phi$ on $\Omega$. We have
\[\langle -\Delta f_0,e^{2\Phi}f_0\rangle=\int_{\Omega}Ve^{2\Phi}f_0\mathrm{d}x\,.\]
Taking the real part and integrating by parts in the left-hand-side, we get the "Agmon formula"
\[
\|\nabla(e^{\Phi}f_0)\|_{L^2(\Omega)}^2-\|f_0e^{\Phi}\nabla\Phi\|_{L^2(\Omega)}^2=\Re\int_{\Omega}Ve^{2\Phi}f_0\mathrm{d}x\,.
\]
Since $V$ has compact support, it follows that
\[
\|\nabla(e^{\Phi}f_0)\|_{L^2(\Omega)}^2-\|f_0e^{\Phi}\nabla\Phi\|_{L^2(\Omega)}^2\leq \|V\|_{L^2(\Omega)}\|f_0\|_{L^2(\Omega)}\underset{\mathrm{supp V}}{\max} e^{2\Phi}\,.
\]
By the Poincar\'e inequality we have
\[
\|\nabla(e^{\Phi}f_0)\|^2\geq\lambda_1(\Omega)\|e^{\Phi}f_0\|^2\,,
\]
where $\lambda_1(\Omega)>0$ is the infimum of the spectrum of the Dirichlet Laplacian on $\Omega$. This shows that
\[
\big( \lambda_1(\Omega)-\|\nabla\Phi\|^2_\infty \big) \int_{\Omega}e^{2\Phi}|f_0|^2\mathrm{d}x\leq  \|V\|_{L^2(\Omega)}\|f_0\|_{L^2(\Omega)}\underset{\mathrm{supp V}}{\max} e^{2\Phi}\,.
\]
Choosing $\Phi(x)=\Phi_m(x)=\alpha\min(\langle x\rangle,m)$ (with $\alpha>0$ fixed small enough) and letting $m\to+\infty$, we see by the Fatou lemma that there exists $C > 0$ such that
\[
\int_{\Omega}e^{2\alpha\langle x\rangle}|f_0|^2\mathrm{d}x \leq C \|f_0\|_{L^2(\Omega)}\,.
\]
Coming back to the Agmon formula, we also see that $f_0$ exponentially decays in $H^1$-norm. By means of elliptic estimates, we can check that it is also the case in $H^{k}(\Omega)$ for all $k$. This proves in particular that $f_0$ belongs to the Schwartz class $\mathscr{S}(\overline{\Omega})$.

We set $\phi=\hat \phi_0-f_0$. It is smooth, it satisfies the Dirichlet condition, $\phi - \hat \phi_0$ belongs to $\mathscr S(\overline \Omega)$ and $\Delta\phi=1$. It remains to discuss the uniform positivity of the normal derivative. By the Hopf lemma we already know that $\partial_\nu \phi > 0$ on $\partial \Omega$, so it is enough to show that this estimate is uniform at infinity.

We have
\begin{equation*}
\partial_\nu \phi= \partial_\nu \hat \phi_0 - \partial_\nu f_0.
\end{equation*}
Since $\Theta$ is a rotation at infinity, we see by the explicit expression of $\phi_0$ that there exists $c_1>0$ such that, for all $x\in\partial\Omega$ with a sufficiently large curvilinear abscissa,
\begin{equation*}
\partial_\nu \hat \phi_0 \geq 2 c_1\,.
\end{equation*}
On the other hand, since $f_0 \in \mathscr S(\overline \Omega)$ we have 
\[\lim_{\substack{|x|\to+\infty\\ x \in \partial \Omega}} \partial_\nu f_0(x)=0\,.\]
Then $\partial_\nu \phi (x) \geq c_1$ for $x \in \partial\Omega$ large enough, and we deduce the uniform positivity of $\partial_\nu \phi$ on $\partial\Omega$.

\subsection{Proof of Proposition \ref{prop.assumptionvalid}}
For $(s,t) \in \Omega_0$ we set
\[
a(s,t) = \det \big( \mathrm{Jac} (\Theta)(s,t) \big) = 1 - t \kappa(s).
\]
Let $\tilde \phi = \phi \circ \Theta$. For $s \in \R$ and $\tau \in (-1,1)$ we set
\[
a_\delta(s,\tau) = a(s,\delta \tau) \quad \text{and} \quad \psi(s,\tau)=\delta^{-2} a_\delta(s,\tau)^{\frac 12} \tilde\phi(s,\delta\tau),
\]	
Finally we define on $\R \times (-1,1)$ the differential operator
\[
\mathscr{M}_\delta=\partial_\tau^2+\delta^2 \big( a^{-\frac12}_\delta\partial_sa_\delta^{-\frac12} \big)^2+\frac{\delta^2\kappa^2}{4 a_\delta^2}.
\]
\begin{lemma}\label{lem.psi}
We have $\mathscr{M}_\delta \psi=a_{\delta}^{\frac12}$ and $\psi( \cdot ,\pm1)=0$.
\end{lemma}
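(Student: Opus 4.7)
The plan is to pull back the identity $\Delta\phi=1$ via $\Theta$ and then perform two standard manipulations: the "ground-state conjugation" $\tilde\phi=\delta^2 a^{-1/2}\tilde\psi$ that trades the curved volume element for a potential term, and the transverse rescaling $t=\delta\tau$.

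First, since $|\gamma'|=1$ and $\mathbf{N}'=-\kappa\gamma'$, the pullback of the Euclidean metric by $\Theta$ on $\Omega_0$ is $a(s,t)^2\,ds^2+dt^2$, with volume form $a\,ds\,dt$, so that $\tilde\phi=\phi\circ\Theta$ satisfies the Laplace--Beltrami equation
\[
\frac{1}{a}\partial_s\!\left(\frac{1}{a}\partial_s \tilde\phi\right)+\frac{1}{a}\partial_t(a\partial_t\tilde\phi)=1.
\]

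Next I would substitute $\tilde\phi=\delta^2 a^{-1/2}\tilde\psi$ with $\tilde\psi(s,t)=\psi(s,t/\delta)$. A short direct computation using $\partial_t a=-\kappa$ (and the cancellation of the first-order transverse terms, which is the usual waveguide conjugation miracle) yields
\[
\frac{1}{a}\partial_t(a\partial_t\tilde\phi)=\delta^2 a^{-1/2}\!\left(\partial_t^2\tilde\psi+\frac{\kappa^2}{4a^2}\tilde\psi\right),\qquad \frac{1}{a}\partial_s\!\left(\frac{1}{a}\partial_s\tilde\phi\right)=\delta^2 a^{-1/2}\bigl(a^{-1/2}\partial_s a^{-1/2}\bigr)^{2}\tilde\psi,
\]
the second identity being a straightforward application of the factorization $\frac{1}{a}\partial_s(\frac{1}{a}\partial_s\cdot)=a^{-1/2}(a^{-1/2}\partial_s a^{-1/2})^{2}(a^{1/2}\,\cdot\,)$. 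Adding these and multiplying by $a^{1/2}/\delta^2$ rewrites the equation as
\[
\partial_t^2\tilde\psi+\bigl(a^{-1/2}\partial_s a^{-1/2}\bigr)^{2}\tilde\psi+\frac{\kappa^2}{4a^2}\tilde\psi=\frac{a^{1/2}}{\delta^2}.
\]

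Finally, I would rescale $t=\delta\tau$: then $\partial_t^2=\delta^{-2}\partial_\tau^2$, $a(s,\delta\tau)=a_\delta(s,\tau)$, $\tilde\psi(s,\delta\tau)=\psi(s,\tau)$, and $s$-derivatives are unchanged. Multiplying the previous equation by $\delta^2$ then recovers precisely the operator $\mathscr{M}_\delta$ and gives $\mathscr{M}_\delta\psi=a_\delta^{1/2}$. The boundary identity $\psi(\cdot,\pm1)=0$ is immediate from $\phi_{|\partial\Omega}=0$, since $\Theta$ sends $\R\times\{\pm\delta\}$ onto $\partial\Omega$ and $a_\delta^{1/2}$ is bounded and nonzero. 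The only step requiring any care is the transverse conjugation computation, but the cancellations are entirely standard for straightened waveguides, and the smoothness of $\phi$ on $\overline{\Omega}$ supplied by Proposition \ref{prop.phi} guarantees that all the manipulations above are classical pointwise identities.
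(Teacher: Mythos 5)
Your proof is correct and follows essentially the same route as the paper: write $\Delta\phi=1$ in tubular coordinates, conjugate by $a^{1/2}$ so that the first-order transverse terms cancel leaving the potential $\kappa^2/(4a^2)$, then rescale $t=\delta\tau$. The only cosmetic difference is that you carry out the transverse conjugation by direct differentiation, whereas the paper packages it as the operator factorization $\big(a^{-1/2}\partial_t a^{1/2}\big)\big(a^{1/2}\partial_t a^{-1/2}\big)=\big(\partial_t-\tfrac{\kappa}{2a}\big)\big(\partial_t+\tfrac{\kappa}{2a}\big)=\partial_t^2+\tfrac{\kappa^2}{4a^2}$.
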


\begin{proof}
Since $\tilde\phi(s,\pm\delta)=0$ we have $\psi( \cdot ,\pm1)=0$ for all $s \in \R$.
In the tubular coordinates the equality $\Delta\phi=1$ reads
\[
\big( a^{-1}\partial_s a^{-1} \partial_s+ a^{-1}\partial_t a \partial_t \big) \tilde\phi=1\,.
\]
Setting $\check\phi =a^{\frac12}\tilde\phi$ we get
\[\left[\big(a^{-\frac12}\partial_sa^{-\frac12}\big)^2+\big(a^{-\frac12}\partial_ta^{\frac12}\big)\big(a^{\frac12}\partial_ta^{-\frac12}\big)\right]\check\phi=a^{\frac12}\,,\]
or
\[\left[\big(a^{-\frac12}\partial_sa^{-\frac12}\big)^2+ \Big(\partial_t-\frac{\kappa}{2a}\Big) \Big(\partial_t + \frac{\kappa}{2a}\Big)\right]\check\phi=a^{\frac12}\,,\]
which gives
\[\left[\big(a^{-\frac12}\partial_sa^{-\frac12}\big)^2+\partial_t^2+\frac{\kappa^2}{4 a^2}\right]\check\phi=a^{\frac12}\,.\]
Since $\psi(s,\tau)=\delta^{-2}\check\phi(s,\delta \tau)$, the conclusion follows.
\end{proof}

\begin{proof} [Proof of Proposition \ref{prop.assumptionvalid}]
We look for an approximation $\Psi_5$ of $\psi$, in the sense that 
\begin{equation} \label{eq.Psi-1}
\mathscr{M}_\delta(\psi-\Psi_5)=\mathscr{O}_{H^2(\R\times (-1,1))}(\delta^5)\,,\quad \psi-\Psi_5\in H^2\cap H^1_0(\R\times[-1,1])\,.
\end{equation}
By elliptic regularity this will give
\[
\|\psi-\Psi_5\|_{H^4(\R\times[-1,1])}=\mathscr{O}(\delta^3),
\]
and then, by Sobolev embeddings,
\begin{equation} \label{eq.Psi-2}
\|\psi-\Psi_5\|_{\mathscr{C}^2(\R\times[-1,1])}=\mathscr{O}(\delta^3)\,.
\end{equation}
We look for $\Psi_5$ of the form $\psi_0 + \delta \psi_1 + \delta^2 \psi_2 + \delta^3 \psi_3 + \delta^4 \psi_4$. Note that we could proceed similarly to get a rest of order $\mathscr O(\delta^N)$ in $\mathscr{C}^k(\R\times[-1,1])$ for any $N$ and $k$.

There exist $M_0,\dots,M_4 \in \mathcal L(H^{4}(\R\times[-1,1]),H^2(\R\times[-1,1]))$ such that in $\mathcal L(H^{4}(\R\times[-1,1]),H^2(\R\times[-1,1]))$ we have 
\[
\mathscr{M}_\delta = \sum_{k=0}^4 \delta^k M_k  + \mathscr O(\delta^5).
\]
In particular,
\[
M_0=\partial^2_\tau\,,\quad M_1=0\,,\quad M_2=\partial^2_s+\frac{\kappa^2}{4}\,.
\]
Similarly, in $H^2(\R\times[-1,1])$ we have by Lemma \ref{lem.psi}
\[
\mathscr M_\delta \psi = \sum_{k=0}^4 \delta^k\alpha_k+ \mathscr O (\delta^5),
\]
with
\[
\alpha_0  = 1, \quad \alpha_1 =  - \frac { \kappa \tau}2 , \quad  \alpha_2 =  - \frac { \tau^2 \kappa^2} 8 ,
\]
and $\alpha_3,\alpha_4 \in \mathscr C^\infty(\overline \Omega)$. We compute $\psi_k$ by induction on $k$. It satisfies
\[
M_0 \psi_k = - \sum_{j=2}^k M_j \psi_{k-j} + \alpha_k, \quad \psi_k (\cdot,\pm 1) = 0.
\]
This gives in particular
\[\psi_0(s,\tau)=\frac{\tau^2-1}{2}, \quad 
\psi_1(s,\tau)=\frac{\kappa(s)}{12}(\tau-\tau^3)\,.
\]
Then $\psi_2$ has to be a solution of
\[{M}_0 \psi_2=-{M}_2 \psi_0-\frac{\kappa^2\tau^2}{8}=\frac{\kappa^2}{4}\left(-\frac{\tau^2-1}{2}-\frac{\tau^2}{2}\right)=\frac{\kappa^2}{4}\left(\frac12-\tau^2\right)\,.\]
This leads to take
\[\psi_2(s,\tau)=\frac{\kappa^2}{4}\left(\frac{\tau^2-1}{4}-\frac{\tau^4-1}{12}\right)=\frac{\kappa^2}{4}\left(\frac{\tau^2}{4}-\frac{\tau^4}{12}-\frac16\right)\,.\]
Due to the asymptotic behavior of $\phi$ given in Proposition \ref{prop.phi}, $\psi-\psi_0$ belongs to the Schwartz class. Thus, $\Psi_5$ satisfies \eqref{eq.Psi-1} and hence \eqref{eq.Psi-2}. Now setting $\Psi = \psi_0 + \delta \psi_1 + \delta^2 \psi_2$ we deduce
\[
\|\psi-\Psi\|_{\mathscr{C}^2(\R\times[-1,1])}=\mathscr{O}(\delta^3)\,.
\]
This gives
\[
\|\delta^{-2}\tilde\phi(s,\delta\tau)-a(s,\delta\tau)^{-\frac12} \Psi\|_{\mathscr{C}^2(\R\times[-1,1])}=\mathscr{O}(\delta^3)
\]
or
\[
\left \| \delta^{-2}\tilde\phi(s,\delta\tau)-\left(1+\delta\tau\frac\kappa2+\delta^2\frac{3}{8}\tau^2\kappa^2\right) \Psi \right \|_{\mathscr{C}^2(\R\times[-1,1])}=\mathscr{O}(\delta^3)\,.
\]
Then
\begin{equation}\label{eq.approxC2}
\left\| \delta^{-2}\tilde\phi(s,\delta\tau)-f_\delta(s,\tau) \right\|_{\mathscr{C}^2(\R\times[-1,1])}
=\mathscr{O}(\delta^3)\,,
\end{equation}
where we have set 
\[
f_{\delta}(s,\tau)=\psi_0+\delta\left(\psi_1+\frac{\tau\kappa} 2\psi_0\right)+\delta^2\left(\frac{3\tau^2\kappa^2}{8}\psi_0+\frac{\tau\kappa}{2} \psi_1+ \psi_2\right)
\]
We have 
\[\begin{split}
f_{\delta}(s,\tau)
=\frac{\tau^2-1}{2}-\frac{\delta\kappa(s)}{6}\left(\tau-\tau^3\right)+\delta^2\kappa^2P_2(\tau)\,,
\end{split}\]
where 
\[P_2(\tau)=\frac{3\tau^2(\tau^2-1)}{16}+\frac{\tau(\tau-\tau^3)}{24}+\frac{\tau^2}{16}-\frac{\tau^4}{48}-\frac{1}{24}\,.\]

Let us explain why $f_\delta$ has a unique minimun, non attained at infinity, and which is non-degenerate.
Firstly, when $s\notin\mathrm{supp}\,\kappa$, we have 
\[
f_\delta(s,\tau)=\frac{\tau^2-1}{2}\geq-\frac12=f_\delta(s,0).
\]
This shows that $f_\delta$ has a minimum. 
This minimum is in fact strictly less than $-\frac12$ and thus attained at points where the curvature is not $0$. Indeed, consider $s_0$ the maximum of $\kappa^2$. We have $\kappa(s_0)\neq 0$, $\kappa'(s_0)=0$, and $\kappa(s_0)\kappa''(s_0)<0$.
Let us notice that
\begin{align*}
f_{\delta}\left(s_0,\frac{\delta\kappa(s_0)}{6}\right)
& =-\frac{1}{2}+\delta^2\kappa^2(s_0)\left(\frac{1}{72}-\frac{1}{36}-\frac{1}{24}\right)+\mathscr{O}(\delta^3)\\
& =-\frac{1}{2}-\frac{\delta^2\kappa^2(s_0)}{18}+\mathscr{O}(\delta^3)\,.
\end{align*}
This shows that, for $\delta$ small enough,
\[\inf_{(s,\tau)\in\R\times(-1,1)} f_{\delta}(s,\tau)\leq -\frac{1}{2}-\frac{\delta^2\max \kappa^2}{18}+C\delta^3<-\frac12\,,\]
and that the infimum is a minimum (which is not attained at infinity).

Now we prove that for $\delta$ small enough all the possible minima are non-degenerate.
Consider a minimum $(s_1,\tau_1)$ of $f_\delta$.  We have $\tau_1\in(-1,1)$ and $\kappa(s_1)\neq 0$. Moreover, we must have
\[\partial_\tau f_\delta(s_1,\tau_1)=0\,,\]
which implies that
\begin{equation}\label{eq.tau1critic}
\tau_1=\frac{\delta\kappa(s_1)}{6}+\mathscr{O}(\delta^2)\,.
\end{equation}
Then, 
\[f_\delta(s_1,\tau_1)=-\frac{1}{2}-\frac{\delta^2\kappa^2(s_1)}{18}+\mathscr{O}(\delta^3)\,.\]
With the upper bound on the minimum, we deduce that
\[0\leq \kappa^2(s_0)-\kappa^2(s_1)\leq C\delta\,.\]
By using the uniqueness and non-degeneracy of the minimum, this implies that
\begin{equation}\label{eq.approxmin}
s_1=s_0+\mathscr{O}(\delta^{\frac12})\,,\quad\tau_1=\frac{\delta\kappa(s_0)}{6}+\mathscr{O}(\delta^2)\,,
\end{equation}
where we used \eqref{eq.tau1critic} and that $\kappa'(s_0)=0$.

Let us now estimate the second derivative of $f_\delta$ at $(s_1,\tau_1)$. We have
\[\partial^2_s f_\delta(s_1,\tau_1)=-\frac{\kappa(s_0)\kappa''(s_0)}{9}\delta^2+o(\delta^2)\,,\quad \partial_s\partial_{\tau}f_\delta(s_1,\tau_1)=\mathscr{O}(\delta^{\frac32})\,,\]
and
\[\partial^2_\tau f_\delta(s_1,\tau_1)=1+\mathscr{O}(\delta^2)\,.\]
We infer that there exist $\delta_0,c>0$ such that for all $\delta\in(0,\delta_0)$ and all minimum $(s_1,\tau_1)$,
\[\mathrm{Hess}_{(s_1,\tau_1)}f_\delta\geq c\delta^2\,.\]
By definition, this means that the minima are non-degenerate.

Let us finally prove that there is only one minimum. Consider two minima $X_1=(s_1,\tau_1)$ and $X_2=(s_2,\tau_2)$. From \eqref{eq.approxmin}, we have, uniformly in $t\in[0,1]$, 
\begin{equation}\label{eq.X12}
X_1+t(X_2-X_1)=\left(s_0,\frac{\delta\kappa(s_0)}{6}\right)+(\mathscr{O}(\delta^{\frac12}),\mathscr{O}(\delta^2))\,.
\end{equation}
Since the differential of $f_\delta$ vanishes at $X_1$, the Taylor formula gives
\[f_\delta(X_2)-f_\delta(X_1)=\int_0^1(1-t) \mathrm{Hess}_{X_1+t(X_2-X_1)}f_\delta(X_2-X_1,X_2-X_1)\mathrm{d}t\,.\]
By using \eqref{eq.X12}, we deduce as before that there exist $\delta_0,c>0$ such that for all $\delta\in(0,\delta_0)$ and all $t\in[0,1]$,
\[\mathrm{Hess}_{X_1+t(X_2-X_1)}f_\delta\geq c\delta^2\,.\]
This shows that
\[0=f_\delta(X_2)-f_\delta(X_1)\geq \frac{\delta^2}{2}|X_1-X_2|^2\,.\]
Therefore, for $\delta$ small enough, $f_\delta$ has a unique minimum $X(\delta)$, which is not attained at infinity and non-degenerate, and 
\[\mathrm{Hess}_{X(\delta)}f_\delta\geq c\delta^2\,.\]
By a perturbative argument using \eqref{eq.approxC2}, this shows that $\delta^{-2}\tilde\phi(s,\delta\tau)$ has also a unique minimum, which is not attained at infinity and non-degenerate. The same conclusion follows for $\phi$.
\end{proof}

\section{Upper bound for the bottom of the spectrum}\label{sec.proof}
This last section is devoted to the proof of Theorem \ref{thm.main}. From the min-max principle, we have
\[\inf\mathrm{sp}(\mathscr{P}_h)=\inf_{\psi\in H^1_0(\Omega)\setminus\{0\}}\frac{\|(-ih\nabla-\mathbf{A})\psi\|^2-h\|\psi\|^2}{\|\psi\|^2}\,.\]
From Lemma \ref{lem.ephih}, we have
\begin{equation}\label{eq.infPh}
\inf\mathrm{sp}(\mathscr{P}_h)=\inf_{u\in H^1_0(\Omega)\setminus\{0\}}\frac{4h^2\int_{\Omega}e^{-2\phi/h}|\partial_{\overline{z}} u|^2\mathrm{d}x}{\int_{\Omega}e^{-2\phi/h}|u|^2\mathrm{d}x}\,.
\end{equation}
Let us construct a convenient test function. It is natural to consider a test function in the form
\[
u(x)=f(x)\chi(x)\,,
\]
where $f\in\mathscr{O}(\Omega)\cap H^1(\Omega)$ is such that $f(x_{\min})\neq 0$ and $\chi$ is of the form $\chi = \rho \circ \Theta^{-1}$ with $\rho(s,\pm\delta)=0$ and $\rho(s,t)= 1$ for all $s \in \R$ and $t\in(-\delta + \epsilon ,\delta- \epsilon)$. This function $\rho$ will be determined below to optimize an upper bound, see \eqref{eq.rho}.

\subsection{Estimate of the numerator}
By using the change of variable $\Theta$, we have
\begin{multline} \label{eq:dbarz-u}
4h^2\int_{\Omega}e^{-2\phi/h}|\partial_{\overline{z}} u|^2\mathrm{d}x \\
=h^2\int_{\Omega_0}e^{-2\tilde\phi(s,t)/h}|\tilde f(s,t)|^2\left(a^{-2}|\partial_s\rho|^2+|\partial_t\rho|^2\right) a(s,t)\mathrm{d}s\mathrm{d}t\,,
\end{multline}
with $\tilde\phi=\phi\circ\Theta$ and $\tilde f=f\circ\Theta$. Since $\rho$ is constant on $\R \times [-\delta + \epsilon, \delta - \epsilon]$, the right-hand side is actually an integral over $\R \times \big( (-\delta, -\delta + \epsilon) \cup (\delta-\epsilon , \delta) \big)$. We begin with the contribution of the integral over $\R \times (\delta-\epsilon , \delta)$. Thanks to the Taylor formula near $t=\delta$, we have
\begin{eqnarray*}
\lefteqn{\int_{\R}\mathrm{d}s\int_{\delta-\epsilon}^\delta \mathrm{d}t \,e^{-2\tilde\phi(s,t)/h}|\partial_t\rho|^2 |\tilde f(s,t)|^2a(s,t)}\\ &&\leq(1+C\epsilon+C\epsilon^2/h)\int_{\R}\mathrm{d}s\int_{\delta-\epsilon}^\delta \mathrm{d}t \,e^{-2(t-\delta)\partial_t\tilde\phi(s,\delta)/h}|\partial_t\rho|^2 |\tilde f(s,t)|^2a(s,\delta)\,.
\end{eqnarray*}
We also want to replace $ |\tilde f(s,t)|^2$ by $ |\tilde f(s,\delta)|^2$. To do so, we remark that, for all $(s,t)\in\R\times(\delta-\epsilon,\delta)$,
\begin{eqnarray*}
&\left||\tilde f(s,t)|^2-|\tilde f(s,\delta)|^2\right|&\leq 2\Re\int_{t}^\delta|\tilde f(s,\tau)||\partial_t\tilde f(s,\tau)|\mathrm{d}\tau\\
&&\leq\left(\|\tilde f(s,\cdot)\|^2_{L^2([\delta-\epsilon,\delta])}+\|\partial_t\tilde f(s,\cdot)\|^2_{L^2([\delta-\epsilon,\delta])}\right)\,,
\end{eqnarray*}
so that
\begin{multline*}
\int_{\R}\mathrm{d}s\int_{\delta-\epsilon}^\delta \mathrm{d}t \,e^{-2(t-\delta)\partial_t\tilde\phi(s,\delta)/h}|\partial_t\rho|^2a(s,\delta)\left| |\tilde f(s,t)|^2- |\tilde f(s,\delta)|^2\right|\\
\leq \int_{\mathbb{R}}\mathrm{d}s\, a(s,\delta) R(s,\epsilon,h)\,,
\end{multline*}
with
\begin{multline}\label{eq.Rseh}
R(s,\epsilon,h)
=\left(
\|\tilde f(s,\cdot)\|^2_{L^2([\delta-\epsilon,\delta])}
+\|\partial_t\tilde f(s,\cdot)\|^2_{L^2([\delta-\epsilon,\delta])}
\right)\int_{\delta-\epsilon}^\delta \mathrm{d}t \,e^{-2(t-\delta)\partial_t\tilde\phi(s,\delta)/h}|\partial_t\rho|^2\,.
\end{multline}
Therefore,
\begin{equation}\label{eq.ubpartialt}
	\begin{split}
	\lefteqn{\int_{\R}\mathrm{d}s\int_{\delta-\epsilon}^\delta \mathrm{d}t \,e^{-2\tilde\phi(s,t)/h}|\partial_t\rho|^2 |\tilde f(s,t)|^2a(s,t)}\\ \leq&(1+C\epsilon+C\epsilon^2/h)\int_{\R}\mathrm{d}s\int_{\delta-\epsilon}^\delta \mathrm{d}t \,e^{-2(t-\delta)\partial_t\tilde\phi(s,\delta)/h}|\partial_t\rho|^2 |\tilde f(s,t)|^2a(s,\delta)\\
	\leq& (1+C\epsilon+C\epsilon^2/h)\Big(\int_{\R}\mathrm{d}s\int_{\delta-\epsilon}^\delta \mathrm{d}t \,e^{-2(t-\delta)\partial_t\tilde\phi(s,\delta)/h}|\partial_t\rho|^2 |\tilde f(s,\delta)|^2a(s,\delta)\\
	&\qquad\qquad\qquad\qquad\qquad\qquad\qquad\qquad\qquad+\int_{\mathbb{R}}\mathrm{d}s\, a(s,\delta) R(s,\epsilon,h)\Big)\,.
	\end{split}
\end{equation}

Looking at the right-hand-side suggests to consider a function $\rho$ that  minimizes $\int_{\delta-\epsilon}^\delta \mathrm{d}t \,e^{-2(t-\delta)\partial_t\tilde\phi(s,\delta)/h}|\partial_t\rho|^2$ among the $H^1$-functions equal to $1$ in $\delta-\epsilon$ and $0$ in $\delta$. This leads to the explicit choice
\begin{equation}\label{eq.rho}
	\rho(s,t)= \frac{1 - e^{2(t-\delta)\partial_t\tilde\phi(s,\delta)/h}}{1 - e^{-2\epsilon\partial_t\tilde\phi(s,\delta)/h}}\,,\quad\forall (s,t)\in\R\times(\delta-\epsilon,\delta)\,.
\end{equation}
The minimum satisfies
\[
\int_{\delta-\epsilon}^\delta \mathrm{d}t \,e^{-2(t-\delta)\partial_t\tilde\phi(s,\delta)/h}|\partial_t\rho|^2
= \frac{2\partial_t\tilde\phi(s,\delta)}{h(1-e^{-2\varepsilon\partial_t\tilde\phi(s,\delta)/h})}\,.
\]
We recall from Proposition \ref{prop.phi} that $\partial_t\tilde\phi(s,\delta) = \partial_\nu \phi (\Theta(s,\delta))$ is uniformly positive.
Choosing $\epsilon=h|\ln h|$, we get, uniformly with respect to $s$,
\begin{equation}\label{eq.dphi/h}
\int_{\delta-\epsilon}^\delta \mathrm{d}t \,e^{-2(t-\delta)\partial_t\tilde\phi(s,\delta)/h}|\partial_t\rho|^2
= \frac{2\partial_t\tilde\phi(s,\delta)}{h}+o(h^{-1})=\mathscr{O}(h^{-1})\,,
\end{equation}
where we used that $\Theta$ and  $\Theta^{-1}$ have uniformly bounded Jacobians. 

Using that $f\in H^1(\Omega)$, we get
\[
\int_{\mathbb{R}}\mathrm{d}s\left(
\|\tilde f(s,\cdot)\|^2_{L^2([\delta-\epsilon,\delta])}
+\|\partial_t\tilde f(s,\cdot)\|^2_{L^2([\delta-\epsilon,\delta])}
\right)\underset{\epsilon\to0}{\longrightarrow}0\,,
\]
so that, with \eqref{eq.Rseh} and \eqref{eq.dphi/h}, it follows that
\[\int_{\mathbb{R}}\mathrm{d}s\, a(s,\delta) R(s,\epsilon,h)=o_{h\to 0}(h^{-1})\,.\]
With \eqref{eq.ubpartialt}, this gives
\begin{eqnarray*}
\lefteqn{\int_{\R}\mathrm{d}s\int_{\delta-\epsilon}^\delta \mathrm{d}t \,e^{-2\tilde\phi(s,t)/h}|\partial_t\rho|^2 |\tilde f(s,t)|^2a(s,t)}
\\ &&
\leq 2h^{-1}\int_{\R}\partial_\nu \phi (\Theta(s,\delta))|\tilde f(s,\delta)|^2a(s,\delta)\mathrm{d}s +o_{h\to0}(h^{-1})\,.
\end{eqnarray*}
Let us now come back to \eqref{eq:dbarz-u}. Considering the term with the tangential derivative, we get with similar computations
\begin{eqnarray*}
\int_{\R}\mathrm{d}s\int_{\delta-\epsilon}^\delta \mathrm{d}t \,e^{-2\tilde\phi(s,t)/h}|\partial_s\rho|^2 |\tilde f(s,t)|^2a(s,t)
=o_{h\to0}(h^{-1})\,.
\end{eqnarray*}
We play the same game with the contribution of the integral over $\R \times (-\delta,-\delta+\epsilon)$ in \eqref{eq:dbarz-u} (notice that $\partial_t\tilde\phi(s,-\delta) = -\partial_\nu \phi (\Theta(s,-\delta))$ is now uniformly negative). 
We get
\begin{align*}
4h^2\int_{\Omega}e^{-2\phi/h}|\partial_{\overline{z}} u|^2\mathrm{d}x
& \leq  2h\|(\partial_\nu \phi)^{\frac12}f\|^2_{\partial\Omega}+o_{h\to0}(h)\,.
\end{align*}
\subsection{Estimate of the denominator and conclusion}
We have
\begin{align*}
\int_{\Omega}e^{-2\phi/h}|u|^2\mathrm{d}x
& = \int_{\Omega}e^{-2\phi/h}|f(x)\chi(x)|^2\mathrm{d}x \\
& = e^{-2\phi_{\min}/h}\int_{\Omega}e^{-2(\phi-\phi_{\min})/h}|f(x)\chi(x)|^2\mathrm{d}x\,.
\end{align*}
The Laplace method yields
\[\int_{\Omega}e^{-2\phi/h}|u|^2\mathrm{d}x= h e^{-2\phi_{\min}/h}\left( |f(x_{\min})|^2\frac{\pi}{\sqrt{\det\mathrm{Hess}_{x_{\min}}\phi}}+o_{h\to0}(1)\right)\,.\]
With \eqref{eq.infPh}, this shows that
\[\inf\mathrm{sp}(\mathscr{P}_h)\leq 2\sqrt{\det\mathrm{Hess}_{x_{\min}}\phi}\frac{\|(\partial_\nu \phi)^{\frac12}f\|^2_{\partial\Omega}}{\pi|f(x_{\min})|^2}(1+o_{h\to0}(1))e^{2\phi_{\min}/h}\,,\]
and Theorem \ref{thm.main} since this estimate holds for all the functions $f$ in $\mathscr{E}$.

\section*{Acknowledgments}
N.R. is grateful to David Krej{\v{c}}i{\v{r}}{\'{\i}}k for sharing the reference \cite{Exner12}.

\bibliographystyle{abbrv}
\bibliography{biblio}

\begin{thebibliography}{10}

\bibitem{BLTRS21}
J.-M. Barbaroux, L.~Le~Treust, N.~Raymond, and E.~Stockmeyer.
\newblock On the semiclassical spectrum of the {D}irichlet-{P}auli operator.
\newblock {\em J. Eur. Math. Soc. (JEMS)}, 23(10):3279--3321, 2021.

\bibitem{CR21}
C.~Cheverry and N.~Raymond.
\newblock {\em A guide to spectral theory---applications and exercises}.
\newblock Birkh\"{a}user Advanced Texts: Basler Lehrb\"{u}cher. [Birkh\"{a}user
  Advanced Texts: Basel Textbooks]. Birkh\"{a}user/Springer, Cham, [2021]
  \copyright 2021.
\newblock With a foreword by Peter D. Hislop.

\bibitem{DE95}
P.~Duclos and P.~Exner.
\newblock Curvature-induced bound states in quantum waveguides in two and three
  dimensions.
\newblock {\em Rev. Math. Phys.}, 7(1):73--102, 1995.

\bibitem{EK05}
T.~Ekholm and H.~Kova{\v{r}}{\'{\i}}k.
\newblock Stability of the magnetic {S}chr\"odinger operator in a waveguide.
\newblock {\em Comm. Partial Differential Equations}, 30(4-6):539--565, 2005.

\bibitem{EKP16}
T.~Ekholm, H.~Kova{\v r}{\'\i}k, and F.~Portmann.
\newblock Estimates for the lowest eigenvalue of magnetic {L}aplacians.
\newblock {\em J. Math. Anal. Appl.}, 439(1):330--346, 2016.

\bibitem{E93}
P.~Exner.
\newblock A twisted {L}andau gauge.
\newblock {\em Phys. Lett. A}, 178(3-4):236--238, 1993.

\bibitem{Exner12}
P.~Exner.
\newblock {Curved Dirichlet waveguides in strong magnetic field}.
\newblock \url{http://gemma.ujf.cas.cz/~krejcirik/OTAMP2012/Exner-open.pdf},
  2012.
\newblock Accessed: 2022-05-20.

\bibitem{EK15}
P.~Exner and H.~Kova\v{r}\'{\i}k.
\newblock {\em Quantum waveguides}.
\newblock Theoretical and Mathematical Physics. Springer, Cham, 2015.

\bibitem{HLT19}
S.~Haag, J.~Lampart, and S.~Teufel.
\newblock Quantum waveguides with magnetic fields.
\newblock {\em Rev. Math. Phys.}, 31(8):1950025, 38, 2019.

\bibitem{HP17}
B.~Helffer and M.~Persson~Sundqvist.
\newblock On the semi-classical analysis of the ground state energy of the
  {D}irichlet {P}auli operator.
\newblock {\em J. Math. Anal. Appl.}, 449(1):138--153, 2017.

\bibitem{K85}
B.~Kawohl.
\newblock {\em Rearrangements and convexity of level sets in {PDE}}, volume
  1150 of {\em Lecture Notes in Mathematics}.
\newblock Springer-Verlag, Berlin, 1985.

\bibitem{K85b}
B.~Kawohl.
\newblock When are solutions to nonlinear elliptic boundary value problems
  convex?
\newblock {\em Comm. Partial Differential Equations}, 10(10):1213--1225, 1985.

\bibitem{KR14}
D.~Krej\v{c}i\v{r}\'{\i}k and N.~Raymond.
\newblock Magnetic effects in curved quantum waveguides.
\newblock {\em Ann. Henri Poincar\'{e}}, 15(10):1993--2024, 2014.

\bibitem{W42}
S.~E. Warschawski.
\newblock On conformal mapping of infinite strips.
\newblock {\em Trans. Amer. Math. Soc.}, 51:280--335, 1942.

\end{thebibliography}

\end{document}